\documentclass[12pt]{article}
\usepackage[utf8]{inputenc}
\usepackage[margin=1in]{geometry} %
\usepackage{booktabs,caption,subcaption,dcolumn} %
\usepackage[flushleft]{threeparttable} %
\usepackage[table]{xcolor}
\usepackage{footmisc}
\usepackage{adjustbox}
\usepackage{multirow}
\usepackage{multicol}
\usepackage{enumerate} %
\usepackage{siunitx}
\usepackage{tabularx}
\usepackage{pdflscape} 
\usepackage{rotating}

\usepackage{setspace}
\makeatletter
\newcommand*{\centernot}{%
  \mathpalette\@centernot
}
\def\@centernot#1#2{%
  \mathrel{%
    \rlap{%
      \settowidth\dimen@{$\m@th#1{#2}$}%
      \kern.5\dimen@
      \settowidth\dimen@{$\m@th#1=$}%
      \kern-.5\dimen@
      $\m@th#1\not$%
    }%
    {#2}%
  }%
}
\makeatother

\usepackage{amsmath}
\usepackage{amsthm}
\usepackage{amssymb}
\usepackage{amsfonts,bm}
\newcommand{\APO}{\mathrm{APO}}
\newcommand{\ATE}{\mathrm{ATE}}
\theoremstyle{definition}

\theoremstyle{plain}
\newtheorem{theorem}{Theorem}

\newtheorem{lemma}{Lemma}

\newtheorem{proposition}{Proposition}

\theoremstyle{definition}
\newtheorem{assump}{Assumption}

\makeatletter
\newenvironment{assumption}[2][]{%
  \def\assump@old{\theassump}%
  \if\relax\detokenize{#1}\relax\else
    \renewcommand{\theassump}{#1}%
  \fi
  \begin{assump}[#2]%
}{%
  \end{assump}%
  \renewcommand{\theassump}{\assump@old}%
}
\makeatother

\usepackage{algorithm}
\usepackage{algpseudocode}

\usepackage[style=apa,natbib=true]{biblatex}
\usepackage{hyperref}
\hypersetup{
    colorlinks=true,
    linkcolor=blue,
    urlcolor=blue,
    citecolor=blue
}

\usepackage[affil-it]{authblk}

\title{Identification of Child Penalties}
\author{Dor Leventer\thanks{Tel Aviv University. I am indebted to my PhD Advisor Itay Saporta-Eksten for guidance in this project. For comments I thank Yoav Goldstien, Analia Schlosser, Roee Levy, Oren Danieli, Ran Abramitzky, David Neumark, Netanel Ben-Porath, Jonathan Roth, and participants in Tel Aviv University, Haifa University, Bar Ilan University, The Hebrew University, Coller School of Management in Tel Aviv University, Ben Gurion University, Reichman University, Arlozorov Forum, IAAE 2026 Annual Conference, the Israeli Economic Association Annual Conference 2026, and the World Inequality Conference 2026. I gratefully acknowledge financial support from The Israel Pollak Fellowship Program for Excellence and the Arlozorov Forum Labor Markets Scholarship.}}
\affil{}
\date{\today}

\graphicspath{{./}}

\begin{document}

\maketitle

\begin{abstract}
\noindent
A large literature estimates child penalties using event studies by gender, normalizing by predicted earnings absent children and reporting the gender gap. 
This paper studies the identification framework underlying this strategy.
Within gender, I argue that parallel trends is violated by selection into the timing of parenthood: higher human capital individuals delay childbirth and have steeper earnings trajectories. 
Between genders, I articulate the normalized design's identification assumptions, which I term normalized triple differences (NTD).
Under NTD, I show that the conventional target, the gender gap in normalized effects, is not identified when parallel trends is violated. 
In contrast, a new causal estimand, the effect of parenthood on the gender earnings ratio, is point identified. 
Using Israeli administrative data, I find that parenthood's share of gender inequality is heterogeneous by age at first childbirth. 
Finally, I show that differences in fertility-timing distributions complicate cross-country comparisons of aggregate child-penalty estimates. 
\medskip
\noindent\textit{Keywords:} child penalty; gender inequality; event study; difference-in-differences; identification.

\noindent\textit{JEL codes:} C21, J13, J16, J31.
\end{abstract}

\clearpage

\noindent
The transition to parenthood is central to the onset of gender inequality in labor markets \citep{goldin2024nobel}. 
A large body of research quantifies how the effects of parenthood on labor market earnings differ between women and men \citep[e.g.,][]{kleven2019children}. 
The common approach, which I term normalized event studies, proceeds as follows. 
Within gender, outcomes are regressed on event-time indicators and age and calendar-year fixed effects. 
Predicted outcomes constructed from the fixed effects alone serve as the estimated counterfactual earnings absent children. 
The estimated event-time coefficients, normalized by the mean of the predicted counterfactual outcomes, capture gender-specific percentage effects of parenthood. 
The post-treatment gender gap in normalized estimates is then analyzed, and is often termed the ``child penalty''.
Recent critiques discuss biases in the event study estimation method \citep[e.g.,][]{melentyeva2023child}. 
However, less attention has been paid to the validity of the identification framework underlying these empirical strategies. 

This paper studies the identification framework underlying the child penalty event-study empirical strategy. 
Most of the analysis focuses on comparing a single treatment group to a single not-yet-treated control group, for example parents that had their first child at 20 as treated vs. 30 as control. 
I begin with within-gender comparisons, and argue that such difference-in-differences (DID) estimates are biased by selection into the timing of parenthood, as higher human capital individuals both delay childbirth and have steeper life-cycle earnings trajectories. 
I then turn to the between-gender normalized event study approach. 
I first articulate the identification assumptions, derived from the tests used in applied work, which I term normalized triple differences (NTD). 
I show that the gender gap in normalized effects, the target causal estimand of normalized event studies, is not identified under NTD when parallel trends is violated. 
A bias-bounding exercise suggests the resulting bias is large for earlier treatment groups. 
I next consider a new causal estimand, the effect of parenthood on the gender earnings ratio, and show it is identified under NTD. Empirically, I document that parenthood accounts for less of gender earnings inequality among older treatment groups. 
Finally, I discuss causal estimands that aggregate across treatment groups, and show how differences in the fertility-timing distribution complicate comparisons of aggregate estimates, such as between countries. 
The theoretical discussion is illustrated empirically using Israeli administrative data.

Considering an event study within a single gender, identification of the effect of parenthood on earnings relies on the parallel trends assumption, namely that the treatment and control groups have similar trends in the counterfactual absent childbirth. 
Human capital models of fertility \citep[e.g.,][]{becker1990human} and life-cycle earnings \citep[e.g.,][]{ben1967production} suggest that individuals with higher labor-market ability both delay childbirth and have steeper earnings trajectories. This mechanism predicts that parallel trends are violated: the control group, which by definition had children later and so is of higher ability, has a steeper counterfactual earnings trend than the treatment group, which had children earlier and so has lower ability. 
I strengthen this argument empirically in two ways. 
First, I show that individuals who delay childbirth have, on average, higher values of pre-childbirth covariates associated with human capital, such as parent income and high school education.
Second, I assess pre-trends. I find that within gender, pre-treatment DID estimates increase with the gap in treatment timing between treated and control groups.
Taken together, since later event times require increasingly distant not-yet-treated control groups, and more distant control groups differ more in human capital, which introduces larger violations of parallel trends, DID estimates at longer horizons are less credible.

To deal with within-gender selection, researchers may difference between genders, which absorbs parallel-trend violations that are common to both genders.
However, as discussed above, current applications normalize within-gender estimates, and then difference between genders.
A rigouros diuscssion of the identification framwork requires articulating the identification assumptions under this approach.
To do so, I consider the between-gender validation test with which researchers give causal interpretation to their estimates: namely, that the pre-treatment gender gap in normalized estimates is zero.\footnote{For a partial survey of the literature and the wording of the validation test, see Appendix~\ref{sec:app_validation_lit}. I find that papers support a causal interpretation with between-gender parallel pre-trends, and not with within-gender zero pre-trends.}
I show that if no anticipation holds, this test validates the following restriction: parallel-trend violations, divided by counterfactual earnings, are equal across genders, which I refer to as NTD. In other words, NTD states that the percentage error in imputing counterfactual earnings is the same for women and men, a percentage analog of the triple differences (TD) assumption that parallel-trends violations in levels are equal across genders.

Having articulated NTD, I show that even when NTD holds, the descriptive gender gap in normalized estimates does not identify its target, the causal gender gap in normalized effects, when the parallel trends assumption is violated.
The resulting bias is multiplicative and corresponds to the percentage error in imputing counterfactual earnings. For example, if parallel-trends violations lead to over-imputing counterfactual earnings by 10\%, the gender gap in normalized effects is scaled by $1/1.1 \approx 0.91$, i.e., its magnitude understated by 9\%. In general, if the control group has steeper counterfactual earnings growth than the treated group, the descriptive gender gap understates its target causal estimand.
To quantify the bias, I develop a bias-bounding exercise, using a bias-correction formula that assumes fathers' effects are known. Empirically, the exercise suggests substantial bias for earlier treatment groups, with the conventional estimator understating child penalties. For example, for parents that had their first child at age 26, bias-corrected estimates are 24--51\% more negative than conventional estimates five years post-childbirth.

Turning to the novel identification result, I show that in treatment groups where NTD holds, the effect of parenthood on the gender earnings ratio is point identified. To illustrate the difference between the two targets, suppose that absent childbirth women would earn 80\% of men, and that childbirth reduces women's earnings by 50\% and men's by 10\%. The gender gap in normalized effects is $-40$ percentage points ($-50\%$ minus $-10\%$). The effect on the gender earnings ratio is $-36$ percentage points: childbirth moves the ratio from 80\% to $80\%\times0.5/0.9\approx44\%$. The latter answers directly how much of observed gender inequality is caused by parenthood. I develop a corresponding estimator and derive cluster-robust standard errors based on its influence function.
The new estimand also enables a direct decomposition of the gender gap. A novel empirical finding is that the extent to which parenthood accounts for gender inequality is heterogeneous by age of first childbirth: at the age of first childbirth, parenthood's share of observed gender earnings inequality ranges from 84\% to 59\% for parents who had their first child at ages 26 and 30, respectively.

The analysis up to this point focuses on a single treatment group. Aggregation across multiple treatment groups introduces a subtlety: I show that the conventional aggregate estimator implicitly weights treatment groups by their counterfactual earnings, hence giving more weight to later-treated, higher-earning groups. I argue for an aggregate that weights by the treatment distribution alone. Turning to comparisons of aggregated effects across strata, such as between countries, I argue that differences in aggregate estimates confound differences in treatment-group-specific effects with differences in fertility-timing distributions. I illustrate this empirically by using the Israeli treatment-group-specific estimates and varying the treatment distribution, and show that the fertility-timing distribution alone can determine whether the aggregate effect of parenthood on gender inequality appears to recede.

\paragraph{Literature review.} This paper is most closely related to studies of child penalties that use between-gender normalized event studies as their empirical strategy; see Appendix \ref{sec:app_validation_lit} for a non-exhaustive list of papers. 
I make four contributions to this literature. First, I articulate the identification assumptions as NTD. Second, I show that under NTD the conventional target estimand is not identified when parallel trends fails. Third, I show that the effect of parenthood on the gender earnings ratio is point identified under NTD, providing a basis for future empirical work to target this estimand. Fourth, because the NTD identification assumptions are stated for a single treatment–control pair, validating them requires pair-level pre-trend tests; the aggregate event-study validation test common in applied work does not test the pair-level assumption. Beyond child penalties, the NTD framework can be applied to other settings that aim to identify the effect of a treatment on between-group inequality, and remains well-defined when the outcome has many zeros, unlike logarithm-based specifications. To facilitate replication and application of the proposed estimators, I developed an open-source \textsf{R} package, \href{https://dorleventer.github.io/childpen}{\texttt{childpen}}, which implements the discussed estimators.

This paper also relates to within-gender estimators, which are not subject to the biases due to normalization. As examples, \citet{melentyeva2023child} implement a stacked DID \citep{cengiz2019effect,wing2024stacked}, \citet{thakral2026child} include cohort fixed effects alongside transparent identifying restrictions, \citet{lin2025long,fajardo2024there} use the estimator of \citet{sun2021estimating}, and \citet{bearth2024beyond} adopt the estimator of \citet{callaway2021difference}.
While these methods correct estimation and specification biases, they do not address identification bias due to violations of the parallel trends assumption.
I argue such violations are likely when the outcome is labor market earnings, based on human capital models.
For other outcomes where parallel trends may be credible, a further advantage of the proposed approach is that the analogous gender-ratio estimand is also identified under DID. Hence, child penalty studies of such outcomes can report it directly, without adopting NTD. More broadly, in other DID applications studying between-group inequality, such an estimand is identified and serves as a potential target.

This paper also connects to the broader DID methodological literature on parallel trends testing \citep{roth2022pretest,ghanem2022selection,roth2023parallel} and DID as an aggregate of multiple $2\times2$s \citep{callaway2021difference,goodman2021difference}.
On validation, I argue that when the control group changes across post-treatment event time, as in staggered designs that use not-yet-treated units as controls, the identifying assumption must be tested separately for each control group.
On aggregation, I analyze new aggregate estimands and estimators and discuss the implications of treatment-distribution heterogeneity for cross-country and cross-subgroup comparisons \citep[e.g.,][]{kleven2019child,andresen2022causes}.

\paragraph{Roadmap.} The rest of the paper is organized as follows.
Section~\ref{sec:empirical_context} presents the normalized event-study empirical strategy, the Israeli data, and the identification setup.
Section~\ref{sec:did_violation} argues that within-gender DID estimates are biased due to selection into the timing of parenthood.
Section~\ref{sec:iden_ass} articulates NTD and establishes that the gender gap in normalized effects is not identified under NTD.
Section~\ref{sec:alt_estimand} shows that the effect of parenthood on the gender earnings ratio is point identified under NTD.
Section~\ref{sec:aggregation} discusses aggregation across treatment groups. 
Section~\ref{sec:conclusion} concludes.

\section{Empirical Context and Identification Setup}\label{sec:empirical_context}
This section formalizes normalized event‐studies,
describes the Israeli administrative data used to illustrate the theory,
and describes the notation and definitions regarding identification used throughout the paper.

\subsection{Normalized Event Studies}\label{sec:event_studies}
Let $Y_{i,a}$ denote real annual labor‐market earnings of individual $i$ at age $a$, and let $D_i$ denote the age at which individual $i$ has their first child. Define $E_{i,a} = a - D_i$ as the time since first childbirth, which I term event time. Let $G_i \in \{f, m\}$ indicate gender, where $f$ represents female and $m$ male.
The estimation algorithm for the normalized event‐study proceeds in three steps.
First, outcomes are regressed on event‐time indicators and fixed effects, separately for each gender. The regression model for gender $g \in \{f, m\}$ is
\begin{equation}\label{eq:event_study}
Y_{i,a} = \sum_{e \neq -1} \beta_e^g 1_{\{E_{i,a} = e\}} + \alpha_a^g + \alpha_t^g + u_{i,a},
\end{equation}
where $1_{\{\cdot\}}$ denotes an indicator function, $\alpha_a^g$ and $\alpha_t^g$ are age and year fixed effects, respectively, and the superscript $g$ indexes gender‐specific coefficients.
In the second step, predicted earnings net of event-time coefficients are computed as $\widetilde{Y}_{i,a} = \widehat{\alpha}_{a}^{G_i} + \widehat{\alpha}_{t}^{G_i}$. 
Finally, the estimated event‐time coefficients $\widehat{\beta}_e^g$ are normalized by the conditional mean of $\widetilde{Y}_{i,a}$ within gender and event time, where $\mathbb{E}_n[\cdot]$ denotes the sample mean:
\begin{equation}\label{eq:cp_main_res} 
\widehat{\theta}_{\mathrm{ES}}(g,e)
= \frac{\widehat{\beta}_e^g}{\mathbb{E}_n\left[\widetilde{Y}_a \mid G = g, E_a = e\right]}.
\end{equation}

Recent work has shown that two-way fixed effects regressions similar to \eqref{eq:event_study} can produce biased estimates in the presence of multiple treatment groups \citep{sun2021estimating,goodman2021difference,de2020two,borusyak2024revisiting}. As \citet{melentyeva2023child} argue, this concern applies to the child penalty setting as well. My focus is on biases arising from the identification framework, not from the estimation procedure: biases remain even when comparing a single treatment group to a single control group across two time periods, i.e., a single $2\times2$.

\subsection{Data}\label{sec:data}
Although the arguments developed below are theoretical and generalizable, I illustrate their implications using a specific application to aid with constructing intuition, empirically assess key claims, and document new empirical insights. To that end, I describe here the main Israeli administrative data used throughout the paper. To assess generalizability, UK and German survey data are also used, with details provided in Appendix~\ref{sec:app_survey_data}.

\paragraph{Sources.} The raw dataset covers all Israeli citizens born between 1970 and 2000, matched to their spouses, parents, and children. It was compiled by the Israeli Central Bureau of Statistics (CBS) and integrates data from several administrative sources, including the Civil Registry, the Ministry of Education and the Israeli Income Tax Authority.

\paragraph{Main Variables.} I now turn to formally defining the treatment and outcome variables. 

\textit{Treatment: Age at birth of first child.} Each individual is linked to their biological children. The year of birth of the earliest child is used to define the year of first childbirth. Subtracting the parent’s year of birth yields their age at birth of first child.

\textit{Outcome: Earnings.} Annual labor market earnings are observed from 1999 to 2020, based on micro-level tax records. Earnings are coded as zero in years with no reported income. All values are expressed in real 2020 New Israeli Shekels (NIS), using the CBS CPI.

The analysis below makes use of several additional variables, defined explicitly in Appendix \ref{sec:appendix_data_vars}. These include grandparents' earnings, nationally administered test scores called Meitzav, number of credits in high-school subjects, years of education and highest education degree. The sample definition also uses ethnicity and religion variables.

\paragraph{Sample Definitions.}
I make the following restrictions on the main sample. First, individuals identified as Arab or Ultra-Orthodox (Haredi) Jews are excluded, due to systematically different fertility and labor market trajectories \citep{yakin2021,gould2024child}. I restrict the sample to individuals born between 1975 and 1990 due to limitations in time coverage of the earnings data.
Furthermore, I drop individual-year observations where the individual is less than 20 years old, corresponding to the typical entry into the labor market after high school completion and mandatory army service. Additionally, I omit individual-year where individuals are reported as dead by the Civil Registry.

I keep individuals in treatment groups $D=20,...,40$, and focus in the main analysis on the subset $D=24,\ldots,34$. The lower bound reflects the requirement that pre-treatment diagnostics extend four years before treatment. The upper bound is set so that estimates can be reported up to five post-treatment years: a parent in $D=34$ is at age 39 five years post-treatment, and the nearest not-yet-treated control then comes from $D=40$. Groups past 40 have very small sample sizes, and hence are excluded.
The dataset used in the main analysis, after the above limitations, consists of 13.7 million individual-year observations, made up of 374 thousand mothers and 320 thousand fathers.
Further construction details are provided in Appendix~\ref{sec:appendix_data_def}.

\paragraph{Sample Statistics.}
Figure \ref{fig:stats} documents for the Israeli data the treatment distribution (panel (a)), mean annual earnings by treatment and gender (panel (b)), and normalized event study estimates of~\eqref{eq:cp_main_res} (panel (c)).
The treatment mode is approximately 28--29 for mothers and 30--31 for fathers.
In panel (b), a visible drop in mean earnings at the age of first childbirth can be observed for women, but not for men. However, understanding what happens counterfactually several years after first childbirth is not immediate from such sample statistics, motivating the event-study based estimates, reported in panel (c). The normalized estimates for mothers and fathers, $\widehat{\theta}_{\mathrm{ES}}(f,e)$ and $\widehat{\theta}_{\mathrm{ES}}(m,e)$ from \eqref{eq:cp_main_res}, are very similar before childbirth. After childbirth, a gender gap emerges, with mothers’ normalized effects more negative than fathers’.

\begin{figure}[p]
    \centering
    \begin{subfigure}[b]{\textwidth}
        \centering
        \subcaption{Distribution of Age at First Childbirth}
        \label{fig:d_dist_israel}
        \includegraphics[width=\textwidth]{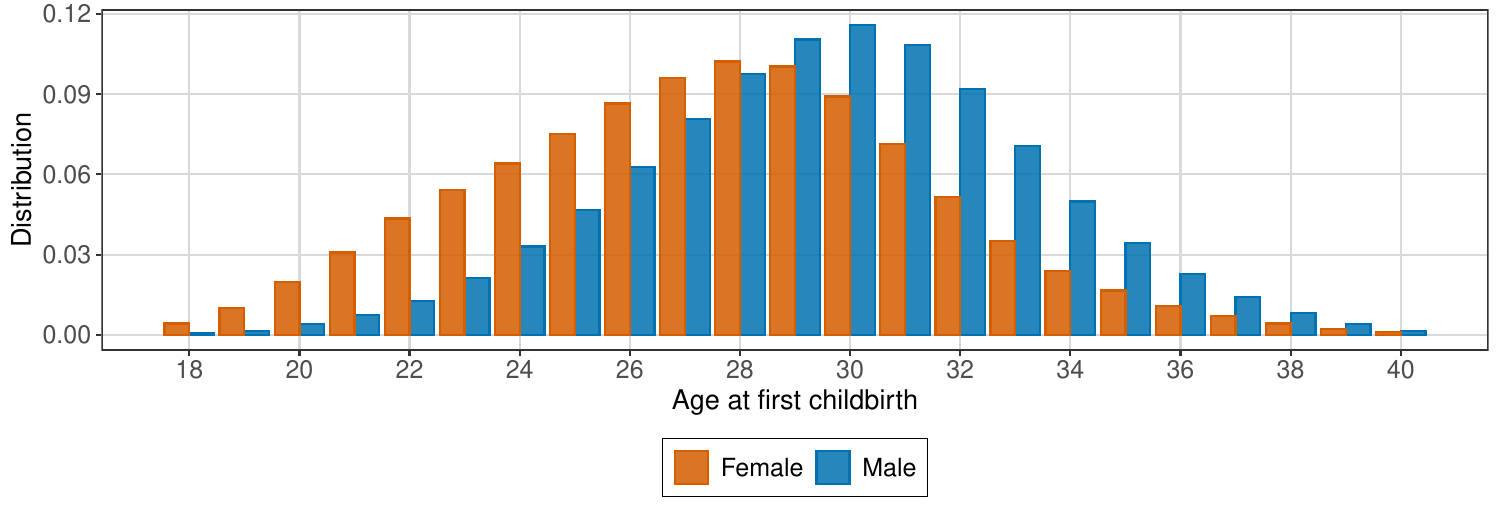}
    \end{subfigure}

    \begin{subfigure}[b]{\textwidth}
        \centering
        \subcaption{Mean Annual Earnings by Age at First Childbirth and Gender}
        \label{fig:raw_earnings_by_d}
        \includegraphics[width=\textwidth]{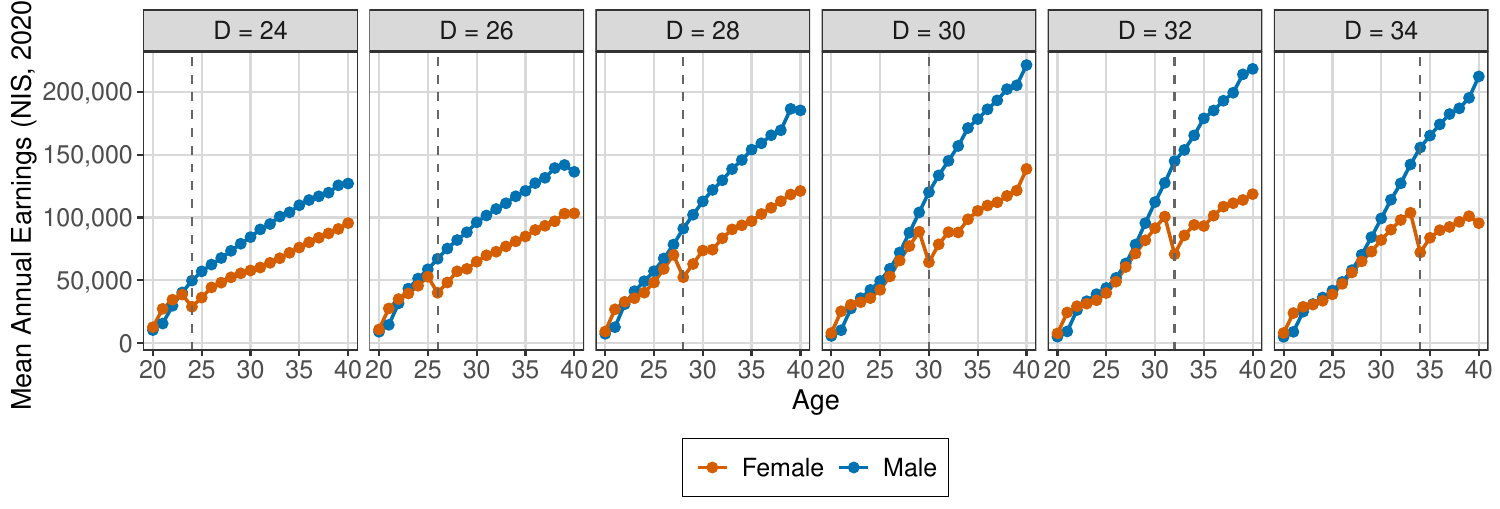}
    \end{subfigure}

    \begin{subfigure}[b]{\textwidth}
        \centering
        \subcaption{Normalized Event Studies}
        \label{fig:event_study}
        \includegraphics[width=\textwidth]{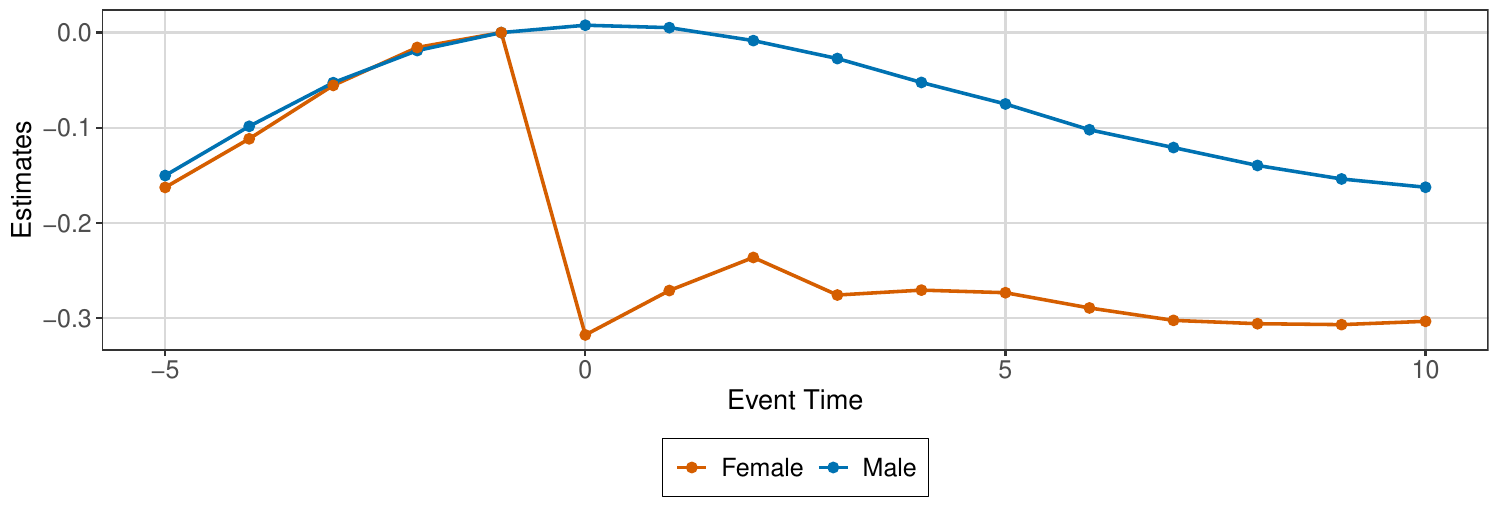}
    \end{subfigure}

    \caption{Sample Statistics and Event Studies. \textit{Notes:} Based on the Israeli administrative data. Panel~(a) documents the distribution of age at first childbirth by gender (colors). Panel~(b) documents mean annual earnings by age for selected treatment groups (columns), separately by gender. Vertical dashed lines mark the age at first childbirth. Panel~(c) presents normalized-event-study child penalty estimates. The sample in panel~(c) includes individuals observed in all ages from 5 years prior treatment to 10 years post treatment.}
    \label{fig:stats}
\end{figure}

\subsection{Identification Setup}\label{sec:iden_setup}
In this section I lay out necessary setup for the discussion on identification. 

\paragraph{$2\times2$ comparisons.}
For a given gender $g$, consider a comparison between a treatment group $d$ and a control group $d^\prime$, and between a target age $a$ and a pre-treatment age $d-1$. Following \citet{melentyeva2023child}, I set the control group to be $d^\prime=a + 1$, the closest-not-yet-treated treatment group.
For example, if $a = 30$ then $d^\prime = 31$, if $a = 31$ then $d^\prime = 32$, and so on.
In event-study terms, the target age $a$ corresponds to event time $e = a - d$, so the $2\times2$ objects defined below are the building blocks of the event-study estimator in~\eqref{eq:cp_main_res}.
I introduce each object below with its full arguments and then suppress $(d,d^\prime,a)$ for brevity.

\paragraph{Potential Outcomes.}
Recall $D_i$ is the age at first childbirth. 
Let $W_{i,a} = 1_{\{a \geq D_i\}}$ denote the treatment status of individual $i$ at age $a$.
This treatment definition fits a staggered adoption design, i.e., $W_{i,a-1}=1\rightarrow W_{i,a}=1$. 
Hence, we can treat parents with $D_i=d$ as a distinct treatment group, untreated in ages before $d$ and treated from age $d$ and onward. 

Under the stable unit treatment value assumption (SUTVA) \citep{rubin1980randomization}, in a staggered adoption design potential outcomes are a function of the timing of treatment \citep[see, e.g.,][]{callaway2021difference}.\footnote{This requires that “age at first birth” satisfies the assumption known as treatment variation irrelevance \citep{vanderweele2009concerning}, one of the two elements in SUTVA. Since this paper's contribution does not concern relaxing SUTVA, but rather analyzing parallel-trend type assumptions, I maintain this assumption throughout.} Formally, let $Y_{i,a}(d)$ be the potential outcome of individual $i$ at age $a$ if the first childbirth occurs at age $d$. Let $Y_{i,a}(\infty)$ denote the potential outcome if $i$ never has a child. Observed outcomes are linked to potential outcomes by the consistency assumption: $Y_{i,a} = Y_{i,a}(D_i)$.

\paragraph{Causal Estimands.}
Let $\APO(g, d, d', a) = \mathbb{E}[ Y_a(d') \mid G = g, D = d]$  
denote the average potential outcome (APO) at age $a$ for individuals of gender $g$ who had their first child at age $d$, had they instead had their first child at age $d'$.  
For a single $(d,a)$, denote $\APO(g,d)\equiv \APO(g, d, d, a)$ and $\APO(g,\infty)\equiv\APO(g, d, \infty, a)$; here the retained argument denotes the counterfactual first-birth age, with the treatment group $d$ and target age $a$ suppressed, so $\APO(g,d)$ is the APO under the observed treatment and $\APO(g, \infty)$ is the APO under the counterfactual treatment of never having children.
Next, define the average treatment effect (ATE) for gender $g$, treatment group $d$ at age $a$ as  $\ATE(g, d, a) = \APO(g, d) - \APO(g, \infty)$. For a single $(d,a)$, denote $\ATE(g)\equiv \ATE(g, d, a)$.
To mirror the event study estimator $\widehat{\theta}_{\mathrm{ES}}(g, e)$ in \eqref{eq:cp_main_res}, I define the causal estimand $\theta(g, d, a) = \ATE(g)/\APO(g, \infty)$.
$\theta(g, d, a)$ captures the proportional earnings loss from childbirth, relative to the counterfactual of never giving birth, at age $a$ for individuals of gender $g$ from treatment group $d$. For a single $(d,a)$, denote $\theta(g)\equiv\theta(g, d, a)$.\footnote{This interpretation aligns with the estimand targeted in child penalty studies, i.e., $\widehat{\theta}_{\mathrm{ES}}(g,e)$ in \eqref{eq:cp_main_res}. For example, \citet[p.~188]{kleven2019children} describe their child penalty estimator as “the year-$t$ effect of children as a percentage of the counterfactual outcome absent children,” where $t$ corresponds to event time $e$ in my notation. Similar normalized estimands appear in the vaccine efficacy and excess mortality literatures \citep{orenstein1985field,msemburi2023estimates}.}  

The term “child penalty” is frequently used to describe the differential impact of parenthood on labor market outcomes between women and men. The normalized event study approach (Section \ref{sec:event_studies}) compares $\widehat{\theta}_{\mathrm{ES}}(g,e)$ in \eqref{eq:cp_main_res} across gender. Hence $\theta(f) - \theta(m)$, which captures the gender gap in relative earnings losses from parenthood, represents the target causal estimand under that empirical strategy.

\paragraph{Descriptive Estimands.}
By a descriptive estimand, I mean a population expectation defined solely in terms of observed outcomes and covariates, without involving potential outcomes beyond the realized outcome $Y$ \citep{abadie2020sampling}. The following three descriptive estimands can be thought of as $2\times2$ plims of DID estimators for the counterfactual APO, ATE and $\theta$.
\begin{align}\label{eq:desc_est}
\delta_{\mathrm{APO}}(g,d,d^{\prime},a) & =\mathbb{E}[Y_{d-1} \mid G = g, D = d] + \mathbb{E}[Y_a - Y_{d-1} \mid G = g, D = d'], \nonumber\\
\delta_{\mathrm{ATE}}(g,d,d^{\prime},a) & = \mathbb{E}[Y_a \mid G = g, D = d] - \delta_{\mathrm{APO}}(g,d,d^{\prime},a), \nonumber\\
\delta_{\theta}(g,d,d^{\prime},a) & = \frac{\delta_{\mathrm{ATE}}(g,d,d^{\prime},a)}{\delta_{\mathrm{APO}}(g,d,d^{\prime},a)}.
\end{align}
As before, for brevity, for a single $(d,a)$ denote $\delta_{\mathrm{APO}}(g)=\delta_{\mathrm{APO}}(g,d,d^{\prime},a)$, $\delta_{\mathrm{ATE}}(g)=\delta_{\mathrm{ATE}}(g,d,d^{\prime},a)$ and $\delta_{\theta}(g)=\delta_{\theta}(g,d,d^{\prime},a)$. In $\delta_{\mathrm{APO}}(g)$, the first term provides the pre-treatment level from the treated group, and the second term adds the trend from the control group. Hence, $\delta_{\mathrm{APO}}(g)$ is how DID constructs the counterfactual APO for the treatment group. $\delta_{\mathrm{ATE}}(g)$ is the conventional DID: four expectations and three differences descriptive estimand. $\delta_{\theta}(g)$ is the ratio of these two, and hence can be viewed as a normalized DID. 
\section{Biases in DID due to Selection}\label{sec:did_violation}
As is well known, parallel trends within gender, defined formally below, allows identification of $\APO(g,\infty)$ and $\ATE(g)$, and in turn of $\theta(g)$.
In this section, I argue that parallel trends is unlikely to hold in the child penalty context due to selection on treatment. Human capital models suggest higher human capital individuals both delay childbirth and have steeper income trajectories. For young treatment groups (lower selection), this means control groups close in fertility timing are more credible, with respect to parallel trends, than far-away ones (higher selection).
Because estimating penalties at longer horizons requires increasingly distant control groups, DID-based estimators are least credible in the long run.

Formally, let 
\begin{align*}
\gamma_{\mathrm{PT}}(g, d, d', a) &= \APO(g, d, \infty, a) - \APO(g, d, \infty, d - 1) \\
&\quad - \left[ \APO(g, d', \infty, a) - \APO(g, d', \infty, d - 1) \right],
\end{align*}
denote the difference in counterfactual earnings trends from age $d - 1$ to $a$ between treatment group $d$ and control group $d'$. For brevity, denote $\gamma_{\mathrm{PT}}(g)=\gamma_{\mathrm{PT}}(g, d, d', a)$.
\begin{assumption}[DID-PT]{Parallel Trends}\label{A.pt}
For gender $g$, treatment group $d$, control group $d'$, and target age $a$, $\gamma_{\mathrm{PT}}(g) = 0$.
\end{assumption}

\subsection{Theoretical Argument}
Standard economic models argue life-cycle earnings trajectories \citep[e.g.,][]{ben1967production} and the timing of childbirth \citep[e.g.,][]{becker1990human} are functions of human capital. These two strands combine in life-cycle models of fertility and labor supply \citep[e.g.,][]{moffitt1984profiles, blackburn1993fertility, adda2017career, francesconi2002joint, keane2010role, jakobsen2022fertility, eckstein2019career}.\footnote{\citet{adda2017career} attribute occupational sorting between early- and late-fertility mothers to family preferences rather than initial ability. Since such preferences shape early-life career and education choices, treatment groups still diverge in their counterfactual earnings trajectories: the correlation between delayed fertility and human capital can emerge through preferences rather than innate ability.} 

Applied to the child penalty context, such models have straightforward implications for parallel trends for certain $2\times2$s $(d,d^\prime,a)$. First, individuals with higher labor-market ability invest more in human capital and delay childbirth. Second, higher-ability individuals have steeper counterfactual earnings trajectories, particularly early in their careers. Combining these two mechanisms generates parallel trends violations. For early treatment groups ($d$), the relevant post-treatment ages span the mid twenties to early thirties ($a$). Not-yet-treated control groups ($d^\prime$) are higher-ability individuals who, even absent children, would be on steeper earnings trajectories. Consequently, the parallel trends violation is negative: $\gamma_\mathrm{PT}(g,d,d',a) < 0$ for these specific $(d,d^\prime,a)$. 

\subsection{Empirical Evidence}
Figure \ref{fig:desc_all} documents that in Israel parents who delay childbirth to around age 30 come from higher-earning and more-educated families, score higher on national exams, and are more likely to take advanced tracks in high school.
Appendix~\ref{sec:app_hc_evidence} reports similar patterns for the UK and Germany, suggesting generalizability beyond a single institutional setting.\footnote{Similarly, \citet{melentyeva2023child} document a positive correlation between age at first childbirth and grandparents' education in Germany. \citet{jensen2024birth} find similar results for Denmark using the parents' final educational attainment.}

\begin{figure}[p]
    \centering
    \includegraphics[width=\textwidth]{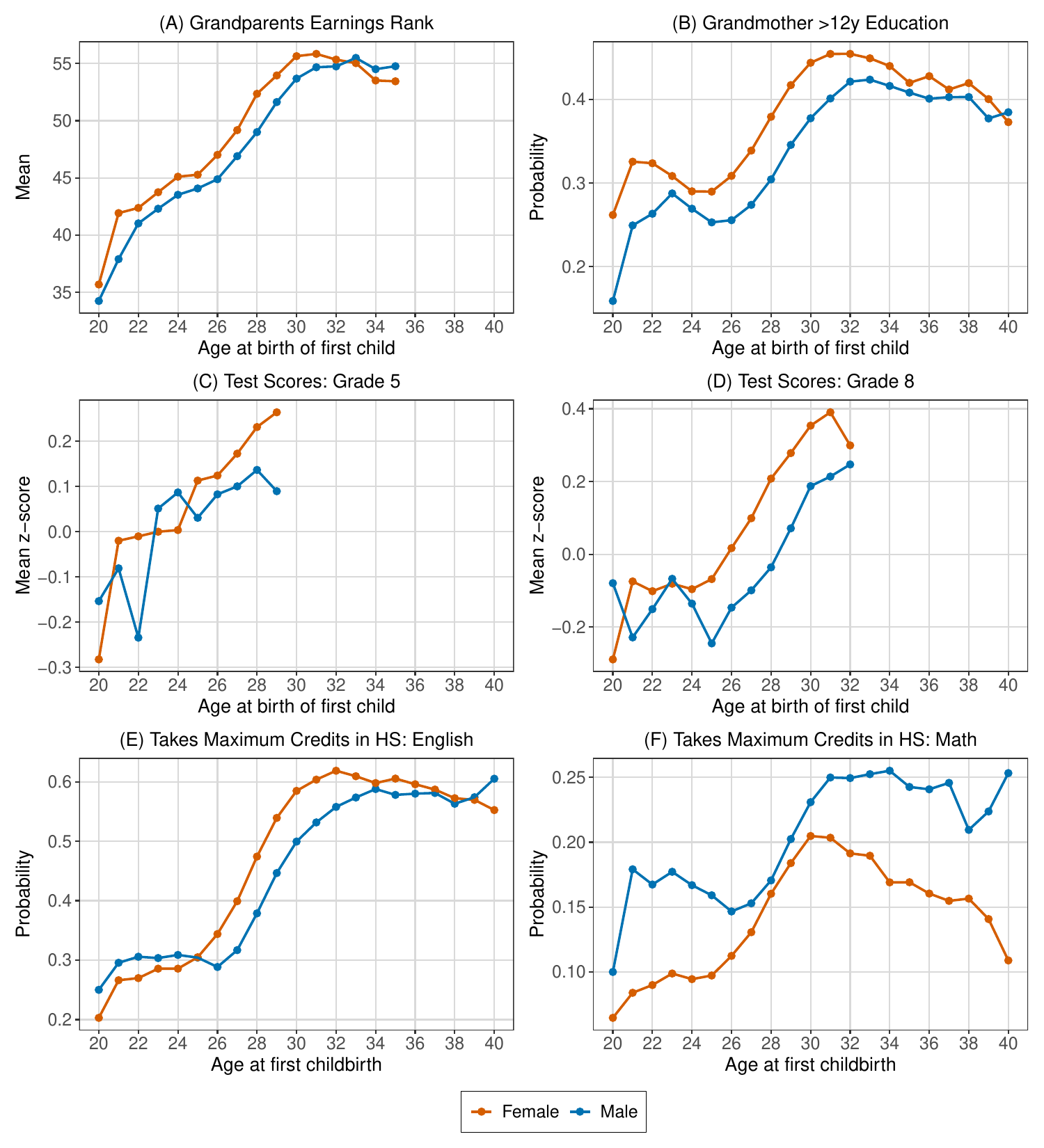}
    \caption{Selection and Age at Birth of First Child. \textit{Notes:} The figure presents means of ranked grandparents earnings when the parents were aged 5-10 (Panel A), probability of grandmothers education being greater than 12 years (Panel B), means of normalized mathematics test scores in grades 5 and 8 (Panels C and D, respectively), and probabilities of taking the maximum number of credits in English and mathematics in high school (Panels E and F, respectively), by age at birth of first child (x-axis) and gender (colors). The test scores are from a nationally administered test (Meitzav). The maximum number of credits per subject is 5 units. The sample changes by the considered variable due to different data-constraints, discussed in detail in Appendix \ref{sec:appendix_data}.}
    \label{fig:desc_all}
\end{figure}

This evidence is in line with selection on treatment.\footnote{Prior studies document a link between delayed childbearing and human capital observables \citep[e.g.,][]{buckles2008understanding}, but rely on post-treatment variables such as career, earnings or final education. In contrast, I use early-life measures, minimizing concerns about reverse causality.} The selection pattern that emerges from the Israeli data shows the selection gradient is largest between ages 20–30, then flattens or declines beyond age 30. Selection patterns for fathers peak slightly later than for mothers, consistent with assortative matching and the average two-year spousal age gap. These patterns support the theoretical prediction that parallel trends violations are likely for comparisons between earlier treatment groups (20-25) and later treatment groups (30+).

\subsection{Validation Tests}\label{sec:violations_validation_did}

In DID, the conventional validation test for Assumption~\ref{A.pt} estimates the difference in trends in pre-treatment ages, commonly referred to as ``pre-trends'' \citep{autor2003outsourcing,roth2022pretest}. As is well known, this is neither sufficient nor necessary for Assumption~\ref{A.pt}, which is on post-treatment ages. Setting the control group $d^\prime$ to the closest-not-yet-treated implies the control group changes in each post-treatment age $a$. In the analysis below (Section~\ref{sec:alt_estimand}) I estimate child penalties on $2\times2$ comparisons ranging across event time $0, \ldots, 5$, requiring six distinct control groups: $d^\prime-d=1,\ldots,6$ for a fixed $d$. Since Assumption~\ref{A.pt} must hold separately for each treatment–control pair, pre-trends testing should be performed at this level as well.\footnote{This point generalizes beyond the closest-not-yet-treated assignment. Whenever the control group consists of not-yet-treated individuals in a staggered adoption design, its composition changes at each post-treatment event time, requiring separate validation tests for each pairing.}
This procedure is not currently done in the literature: event-study validation tests commonly aggregate across all treatment and control groups (Section~\ref{sec:event_studies}). However, such aggregation does not validate Assumption~\ref{A.pt}, which must hold separately for each treatment–control pair.\footnote{\citet{liu2025cohort} makes a related argument for conducting robust inference at the cohort-period level in staggered designs, showing that estimator biases decompose additively into cohort-level parallel-trends violations. However, the human capital argument suggests biases are of the same sign when comparing younger treatment groups to older treatment groups, and so predicts they do not cancel out in aggregate.}

Figure~\ref{fig:validation_pretrends} reports pre-trends, estimates of $\delta_{\mathrm{ATE}}(g)$ in pre-treatment ages using sample analogs, by treatment–control pairs. 
The figure shows pre-treatment DID estimates increase in magnitude as the treatment–control age gap widens, across nearly all treatment groups and for both mothers and fathers. 
For example, when $D=26$ serves as the control group for treatment group $D=25$ (control $+1$ in the legend) the pre-trends are small with confidence intervals that include zero, while when $D=31$ (control $+6$) serves as control the pre-trends are large and significant. Again, these results support that as treatment timing between treatment and control groups drifts farther apart the parallel trends assumption is more likely to be violated. 

\begin{figure}[t!]
    \centering
    \includegraphics[width=\textwidth]{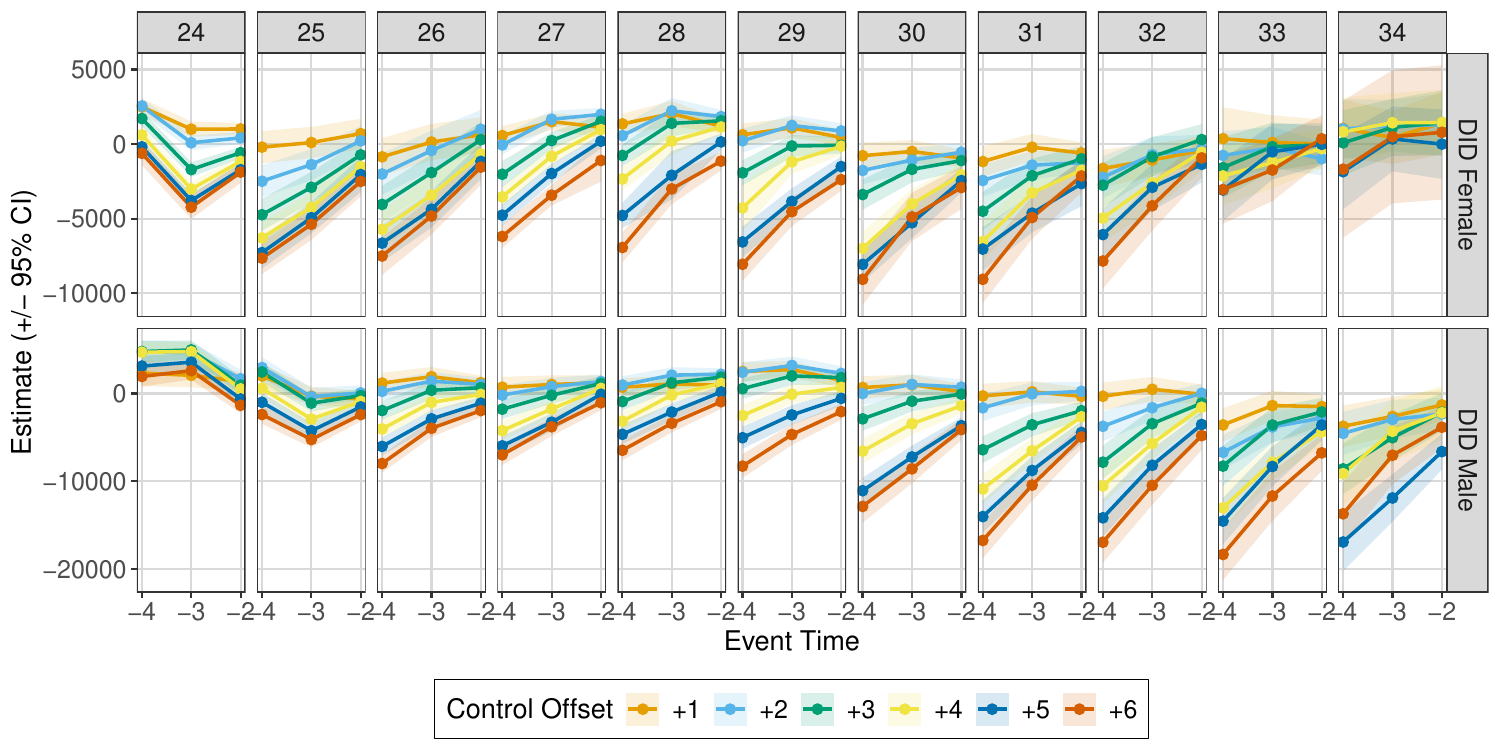}
    \caption{DID Validation Tests by Treatment-Control Pairs. \textit{Notes:} The figure presents pre-treatment DID estimates, with treatment groups in columns, control groups in colors, mothers in the top row and fathers in the bottom row. Control groups $d^\prime$ are shown by distance from treatment group $d$, from $d^\prime-d=1$ up to $d^\prime-d=6$. For example, for treatment group $D=24$ (left-most column), control group $+1$ corresponds to $D=25$, $+2$ to $D=26$, and so on.}
    \label{fig:validation_pretrends}
\end{figure}

\section{Biases in TD due to Normalization}\label{sec:iden_ass}
The previous section argued that selection biases DID within gender.
A potential way forward, since the selection gradient is similar for men and women, is to additionally difference between men and women, i.e., triple difference (TD).
In practice, researchers normalize within-gender by counterfactual earnings, and then consider differences by gender (Section~\ref{sec:event_studies}). 
In some sense, normalizing and then taking the gender difference is similar to transforming to logs and taking the gender difference.
Hence, normalizing may be warranted if researchers believe the gender difference in parallel trend violations is smaller in percentage terms than in levels, similar to the argument to prefer a logarithmic functional form of parallel trends over parallel trends in levels.
Furthermore, normalization side-steps the issue of zero earnings, which are not negligible in the data, and introduce problems when taking logs \citep{chen2024logs}. 
For further discussion on TD in logs, see Appendix~\ref{sec:td_violation}.
This section begins by articulating the identification assumptions that are being tested by normalized event studies. I then show that the conventional target estimand, the gender gap in normalized effects, is not identified even when these assumptions hold. I conclude with a bias-bounding exercise gauging the magnitude of the resulting bias.

\subsection{Deriving The Identification Assumptions}
I now turn to derive the identification assumption that is implicitly maintained in normalized event studies, as inferred from the empirical validation test commonly used in applied work. Before presenting the result, I introduce the no anticipation assumption.  
The no anticipation assumption requires that potential outcomes before childbirth are the same under the observed treatment path and the counterfactual of never giving birth \citep{abbring2003nonparametric}.\footnote{If outcomes at age $d - 1$ are affected by anticipatory behavior, the no anticipation assumption can instead be imposed at earlier ages (e.g., $d - 2$ or $d - 3$), shifting the baseline period for the treated group. Such concerns may also motivate shifting the closest-not-yet control group to later ages (e.g., $a+2$ or $a+3$).} Formally,
\begin{assumption}[NA]{No Anticipation}\label{A.no.ant.1}
For gender $g$, treatment age $d$, and target age $a<d$, $\APO(g, d) = \APO(g, \infty)$.
\end{assumption}
\noindent
Next, I turn to analyzing the validation test used in applied work, that prior to childbirth the gender gap in normalized estimated effects is zero. The null hypothesis can be stated as
\begin{equation}\label{eq:event_study_validation}
    e<0:\theta_{\mathrm{ES}}(f, e) - \theta_{\mathrm{ES}}(m, e) = 0,
\end{equation}
where $\theta_{\mathrm{ES}}(g,e)$ denotes the plim of $\widehat{\theta}_{\mathrm{ES}}(g,e)$ from \eqref{eq:cp_main_res}.
The question is what assumptions this test is validating.
To answer this, I consider the equivalent of \eqref{eq:event_study_validation} for a single $2\times2$:
\begin{equation}\label{eq:ntd_validation_test}
a<d-1, d<d^\prime:\delta_{\theta}(f,d,d^\prime,a)-\delta_{\theta}(m,d,d^\prime,a)=0,
\end{equation}
where $\delta_{\theta}(g)$ is the normalized DID estimand defined in \eqref{eq:desc_est}.
The following result shows what restriction on potential outcomes arises if the test \eqref{eq:ntd_validation_test} and Assumption \ref{A.no.ant.1} hold in the data. 

\begin{proposition}\label{prop:cp_ident_ass}
Consider a $2\times2$ with treatment group $d$, control group $d^\prime > d$, and pre-treatment target age $a < d - 1$.
If Assumption~\ref{A.no.ant.1} holds, then
\begin{equation*}
    \delta_{\theta}(f)=\delta_{\theta}(m)\quad\Longleftrightarrow\quad\frac{\gamma_{\mathrm{PT}}(f)}{\APO(f,\infty)} = \frac{\gamma_{\mathrm{PT}}(m)}{\APO(m,\infty)}.
\end{equation*}
\end{proposition}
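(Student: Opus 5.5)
The plan is to express the descriptive normalized DID $\delta_{\theta}(g)$ in terms of potential outcomes, using consistency and Assumption~\ref{A.no.ant.1}. The conclusion will then follow from elementary algebra on ratios.

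\textbf{Step 1: rewrite $\delta_{\mathrm{APO}}(g)$.} Since $a<d-1<d<d'$, every observed outcome entering $\delta_{\mathrm{APO}}(g)$ is pre-treatment for its group. By consistency, $\mathbb{E}[Y_{d-1}\mid G=g,D=d]=\APO(g,d,d,d-1)$, and Assumption~\ref{A.no.ant.1} turns this into $\APO(g,d,\infty,d-1)$. The same two steps apply to the control-group terms $\mathbb{E}[Y_a\mid G=g,D=d']$ and $\mathbb{E}[Y_{d-1}\mid G=g,D=d']$, since $a<d'$ and $d-1<d'$. These become $\APO(g,d',\infty,a)$ and $\APO(g,d',\infty,d-1)$. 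Adding and subtracting $\APO(g,d,\infty,a)=\APO(g,\infty)$ and comparing with the definition of $\gamma_{\mathrm{PT}}$ gives
\begin{equation*}
\delta_{\mathrm{APO}}(g)=\APO(g,\infty)-\gamma_{\mathrm{PT}}(g).
\end{equation*}

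\textbf{Step 2: rewrite $\delta_{\mathrm{ATE}}(g)$.} The treated-group term $\mathbb{E}[Y_a\mid G=g,D=d]$ is also pre-treatment, because $a<d$. Hence it equals $\APO(g,d)=\APO(g,\infty)$ by consistency and no anticipation. Subtracting Step 1 gives $\delta_{\mathrm{ATE}}(g)=\gamma_{\mathrm{PT}}(g)$, which is the standard result that pre-trends equal parallel-trend violations under no anticipation. Writing $r_g=\gamma_{\mathrm{PT}}(g)/\APO(g,\infty)$, we then have
\begin{equation*}
\delta_{\theta}(g)=\frac{\gamma_{\mathrm{PT}}(g)}{\APO(g,\infty)-\gamma_{\mathrm{PT}}(g)}=\frac{r_g}{1-r_g}.
\end{equation*}

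\textbf{Step 3: conclude.} The map $x\mapsto x/(1-x)$ is strictly increasing on each interval where $x\neq 1$. It is injective on $\mathbb{R}\setminus\{1\}$, since $x/(1-x)=y/(1-y)$ cross-multiplies to $x=y$. Therefore $\delta_{\theta}(f)=\delta_{\theta}(m)$ if and only if $r_f=r_m$, which is exactly the stated equivalence.

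\textbf{Main obstacle.} The only delicate point is well-definedness, which must be stated explicitly as implicit regularity conditions. We need $\APO(g,\infty)\neq 0$ so that $r_g$ is defined. We also need $\delta_{\mathrm{APO}}(g)\neq 0$, equivalently $r_g\neq 1$, so that $\delta_{\theta}(g)$ is defined. With positive mean earnings both conditions are natural.

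The other point to check carefully is that no anticipation is applied at the right ages for each group. It is needed at age $a$ and age $d-1$ for the treated group $d$, and at ages $a$ and $d-1$ for the control group $d'$. This is where the restrictions $a<d-1$ and $d'>d$ are used.
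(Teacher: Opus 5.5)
Your proposal is correct and follows the paper's proof essentially step for step. The paper likewise rewrites $\delta_{\mathrm{APO}}(g)$ and $\delta_{\mathrm{ATE}}(g)$ in terms of $\APO(g,\infty)$ and $\gamma_{\mathrm{PT}}(g)$ (via Lemma~\ref{lemma:desc_tie_cause}), uses no anticipation to set $\ATE(g)=0$ at the pre-treatment age, and then passes from equality of $\gamma_{\mathrm{PT}}(g)/(\APO(g,\infty)-\gamma_{\mathrm{PT}}(g))$ across genders to equality of $\gamma_{\mathrm{PT}}(g)/\APO(g,\infty)$; your injectivity argument for $x\mapsto x/(1-x)$ and your explicit nonzero-denominator conditions just spell out what the paper compresses into ``by algebra.''
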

\noindent
The proof is presented in Appendix~\ref{sec:app_ident_proofs}.
Proposition~\ref{prop:cp_ident_ass} shows that \eqref{eq:ntd_validation_test} is a joint test of two restrictions: Assumption~\ref{A.no.ant.1} and that the violations of parallel trends, once normalized by the counterfactual APO, are equal across genders. If both hold the test is satisfied, so a rejection implies at least one is violated.
I now state the additional restriction formally as a new identification assumption:
\begin{assumption}[NTD-PT]{Equal Difference in Normalized Trends}\label{A.ntd_pt}
For treatment group $d$, control group $d^\prime$ and target age $a$, $$\frac{\gamma_{\mathrm{PT}}(f)}{\APO(f,\infty)} = \frac{\gamma_{\mathrm{PT}}(m)}{\APO(m,\infty)},$$
and that $\APO(f,\infty),\APO(m,\infty)\neq0$.
\end{assumption}
\noindent
An equivalent reading of the assumption is that the percentage error in imputing counterfactual earnings is the same across genders, since the DID-imputed counterfactual differs from the true counterfactual APO by exactly $\gamma_{\mathrm{PT}}$ (see Lemma~\ref{lemma:desc_tie_cause} below). To aid intuition, compare the new assumption to the TD assumption.
\begin{assumption}[TD-PT]{Equal Difference in Trends}\label{A.td}
For treatment group $d$, control group $d'$, and target age $a$, $\gamma_{\mathrm{PT}}(f) = \gamma_{\mathrm{PT}}(m)$.
\end{assumption}
\noindent
Assumption \ref{A.td} requires trend differences are equal in levels between genders. Assumption \ref{A.ntd_pt} can hence be viewed as a percentage analog of Assumption \ref{A.td}. Given Proposition~\ref{prop:cp_ident_ass}, Assumption~\ref{A.ntd_pt} can be interpreted as the implicit identification assumption underlying the normalized event-studies empirical strategy (Section~\ref{sec:event_studies}). Going forward, I refer to the identification framework based on Assumption~\ref{A.ntd_pt} as the normalized triple differences (NTD) framework.

\subsection{Non-Identification of the Conventional Target Estimand}\label{sec:bias_char}
The gender gap in normalized DID is what normalized event studies aim to estimate. Its target causal estimand is the gender gap in normalized effects. The following result characterizes how this descriptive estimand is biased for its target causal estimand under NTD.

\begin{theorem}\label{thm:cp_iden_res}
Consider a $2\times2$ with treatment group $d$, post-treatment age $a$, and control group $d^\prime$ such that $d^\prime >a\geq d$, and assume Assumptions \ref{A.no.ant.1} and \ref{A.ntd_pt} hold. Assume further that $\APO(g,\infty)-\gamma_{\mathrm{PT}}(g)\neq0$ for $g\in\{f,m\}$. Then
\begin{equation*}
\delta_{\theta}(f) - \delta_{\theta}(m)
= \mathrm{Bias} \times \big[ \theta(f) - \theta(m) \big],
\end{equation*}
where
\begin{equation*}
\mathrm{Bias} = \frac{\APO(f,\infty)}{\APO(f,\infty) - \gamma_{\mathrm{PT}}(f)} = \frac{\APO(m,\infty)}{\APO(m,\infty) - \gamma_{\mathrm{PT}}(m)}.
\end{equation*}
\end{theorem}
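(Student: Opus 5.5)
The plan is to first express each descriptive estimand in terms of causal objects, then carry out the algebra. The central step is the identity $\delta_{\mathrm{APO}}(g)=\APO(g,\infty)-\gamma_{\mathrm{PT}}(g)$; this is the content that Lemma~\ref{lemma:desc_tie_cause} is referred to as providing, and I would prove it directly.

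\textbf{Step 1: the imputation identity.} In $\delta_{\mathrm{APO}}(g)$ I rewrite the observed expectations using consistency, $Y_a=Y_a(D)$.
\begin{itemize}
\item The treated group at age $d-1$ is pre-treatment, so Assumption~\ref{A.no.ant.1} gives $\mathbb{E}[Y_{d-1}\mid G=g,D=d]=\APO(g,d,\infty,d-1)$.
\item The control group $d'$ is untreated at both $a$ and $d-1$, since $d'>a$. By no anticipation applied to group $d'$, its observed means at these ages equal $\APO(g,d',\infty,a)$ and $\APO(g,d',\infty,d-1)$.
\end{itemize}
Substituting these and comparing with the definition of $\gamma_{\mathrm{PT}}(g)$, the pieces telescope to $\delta_{\mathrm{APO}}(g)=\APO(g,\infty)-\gamma_{\mathrm{PT}}(g)$. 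For the treated group at age $a\ge d$, consistency gives $\mathbb{E}[Y_a\mid G=g,D=d]=\APO(g,d)$. Hence
\[
\delta_{\mathrm{ATE}}(g)=\APO(g,d)-\APO(g,\infty)+\gamma_{\mathrm{PT}}(g)=\ATE(g)+\gamma_{\mathrm{PT}}(g).
\]

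\textbf{Step 2: the normalized DID.} Using Step 1,
\[
\delta_\theta(g)=\frac{\ATE(g)+\gamma_{\mathrm{PT}}(g)}{\APO(g,\infty)-\gamma_{\mathrm{PT}}(g)},
\]
which is well defined by the theorem's nonvanishing assumption. Write $r\equiv\gamma_{\mathrm{PT}}(g)/\APO(g,\infty)$. By Assumption~\ref{A.ntd_pt}, $r$ is common to both genders, and the assumption also guarantees $\APO(g,\infty)\neq0$. Dividing numerator and denominator by $\APO(g,\infty)$ gives
\[
\delta_\theta(g)=\frac{\theta(g)+r}{1-r}.
\]

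\textbf{Step 3: difference across genders.} Since $r$ is common, the additive $r$ terms cancel when differencing:
\[
\delta_\theta(f)-\delta_\theta(m)=\frac{1}{1-r}\,\big[\theta(f)-\theta(m)\big].
\]
Finally, $1/(1-r)=\APO(g,\infty)/\big(\APO(g,\infty)-\gamma_{\mathrm{PT}}(g)\big)$ for each $g$, and the two expressions are equal by NTD. This is the stated Bias term.

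\textbf{Main obstacle.} The algebra is routine. The delicate part is the bookkeeping in Step 1: one must check that the control group's observed outcomes are untreated at both $a$ and $d-1$, which uses $d'>a$ together with no anticipation for group $d'$ (Assumption~\ref{A.no.ant.1} applied to the control group). One must also check that the indices in $\gamma_{\mathrm{PT}}$ line up exactly with the terms of $\delta_{\mathrm{APO}}$. I expect this same identity to underlie Proposition~\ref{prop:cp_ident_ass}, so I would state it once and reuse it there.
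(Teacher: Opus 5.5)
Your proposal is correct and follows essentially the same route as the paper: the paper first proves Lemma~\ref{lemma:desc_tie_cause} (your Step 1, including no anticipation for both $d$ and $d'$) to write $\delta_\theta(g)=\theta(g)\frac{\APO(g,\infty)}{\APO(g,\infty)-\gamma_{\mathrm{PT}}(g)}+\frac{\gamma_{\mathrm{PT}}(g)}{\APO(g,\infty)-\gamma_{\mathrm{PT}}(g)}$, and then uses NTD to show that both the multiplicative factor and the additive term are gender invariant, so the additive term cancels on differencing. Your common ratio $r$ packages those two invariance identities into a single step, since the paper's expression is exactly $(\theta(g)+r)/(1-r)$.
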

\noindent
Theorem \ref{thm:cp_iden_res} shows the descriptive gender gap in normalized DID does not identify the gender gap in normalized effects under NTD when the parallel trends assumption (Assumption \ref{A.pt}) is violated. The bias term is gender invariant under Assumption~\ref{A.ntd_pt}.\footnote{Lemma \ref{lem:ntd_cross_bias} in Appendix \ref{sec:app_ident_proofs} shows that $P=[\ATE(f)-\ATE(m)]/\APO(f,\infty)$, the $2\times2$ analogue of the estimator that \citet{kleven2019children} term the child penalty ($P_t$ in their notation), is also not identifiable under NTD when parallel trends in levels fail.} 
The mechanism of the bias in Theorem \ref{thm:cp_iden_res} is made transparent in the following lemma, which shows how each descriptive estimand relates to its causal counterpart.

\begin{lemma}\label{lemma:desc_tie_cause}
Consider a $2\times2$ with treatment group $d$, control group $d^\prime>a$, pre-treatment age $d-1$, and target age $a$, which can be pre- or post-treatment. Assume Assumption~\ref{A.no.ant.1} holds for both $g\in\{f,m\}$ and treatment groups $\{d,d^\prime\}$, and $\APO(g,\infty)-\gamma_{\mathrm{PT}}(g)\neq0$. Then
\begin{align}
\delta_{\mathrm{APO}}(g) & = \APO(g,\infty) - \gamma_{\mathrm{PT}}(g), \label{eq:app_proof_cp_ass_1}\\
\delta_{\mathrm{ATE}}(g) & = \ATE(g) + \gamma_{\mathrm{PT}}(g), \label{eq:app_proof_cp_ass_2}\\
\delta_{\theta}(g) & = \theta(g) \frac{\APO(g,\infty)}{\APO(g,\infty) - \gamma_{\mathrm{PT}}(g)} + \frac{\gamma_{\mathrm{PT}}(g)}{\APO(g,\infty) - \gamma_{\mathrm{PT}}(g)}. \label{eq:lemma_d_theta_bias}
\end{align}
\end{lemma}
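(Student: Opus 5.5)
The plan is to prove the three identities in order, each building on the previous one, by expanding the descriptive estimands in \eqref{eq:desc_est} into average potential outcomes.

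\textbf{Step 1: the identity \eqref{eq:app_proof_cp_ass_1}.} By consistency, $\mathbb{E}[Y_{d-1}\mid G=g,D=d]=\APO(g,d,d,d-1)$. Since $d-1<d$, Assumption~\ref{A.no.ant.1} for group $d$ turns this into $\APO(g,d,\infty,d-1)$. For the control group, $\mathbb{E}[Y_a\mid G=g,D=d']=\APO(g,d',d',a)$ and $\mathbb{E}[Y_{d-1}\mid G=g,D=d']=\APO(g,d',d',d-1)$. Both ages $a$ and $d-1$ lie strictly below $d'$, because $d'>a$ and $d'>d$; the latter follows from $d'>a\geq d$, or from $d'>a$ together with the paper's convention $d'>d$. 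So no anticipation for group $d'$ replaces both terms by their never-treated counterparts. Substituting, $\delta_{\APO}(g)=\APO(g,d,\infty,d-1)+\APO(g,d',\infty,a)-\APO(g,d',\infty,d-1)$. Adding and subtracting $\APO(g,d,\infty,a)=\APO(g,\infty)$ and comparing with the definition of $\gamma_{\mathrm{PT}}(g)$ gives $\APO(g,\infty)-\gamma_{\mathrm{PT}}(g)$.

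\textbf{Step 2: the identity \eqref{eq:app_proof_cp_ass_2}.} Consistency gives $\mathbb{E}[Y_a\mid G=g,D=d]=\APO(g,d)$. Subtracting Step~1 yields $\APO(g,d)-\APO(g,\infty)+\gamma_{\mathrm{PT}}(g)=\ATE(g)+\gamma_{\mathrm{PT}}(g)$. If $a<d$, no anticipation makes $\ATE(g)=0$, but the formula holds as stated in either case.

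\textbf{Step 3: the identity \eqref{eq:lemma_d_theta_bias}.} Divide Step~2 by Step~1; the denominator is nonzero by hypothesis. Write $\ATE(g)=\theta(g)\APO(g,\infty)$, which requires $\APO(g,\infty)\neq0$ so that $\theta(g)$ is defined. This is implicit in the definition of $\theta$ and is also part of Assumption~\ref{A.ntd_pt}. Splitting the numerator then gives the two terms in \eqref{eq:lemma_d_theta_bias}.

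The computation is routine. The only point needing care is Step~1, where one must check that every control-group age invoked, namely $a$ and $d-1$, is strictly below $d'$, and that the treated baseline age $d-1$ is below $d$. This is exactly where the hypotheses $d'>a$ and no anticipation for both groups $\{d,d'\}$ are used, and it is what lets $a$ be either pre- or post-treatment.
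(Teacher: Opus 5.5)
Your proposal is correct and follows essentially the same route as the paper: use consistency and no anticipation for both groups $d$ and $d'$ to rewrite the observed means as never-treated APOs, match $\delta_{\APO}(g)$ against the definition of $\gamma_{\mathrm{PT}}(g)$, then subtract to get $\delta_{\mathrm{ATE}}(g)$ and divide to get $\delta_{\theta}(g)$. The paper's proof leaves two details implicit that you check explicitly: that $d-1<d'$, which the stated hypothesis $d'>a$ alone does not guarantee when $a$ is pre-treatment, and that $\APO(g,\infty)\neq 0$ is needed for $\theta(g)$ to be defined.
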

\noindent
Lemma \ref{lemma:desc_tie_cause} shows that the within-gender normalized DID \eqref{eq:lemma_d_theta_bias} is composed of two terms. The second term is the parallel-trends violation relative to the imputed counterfactual; under NTD it is equal across genders and cancels when differencing. The first term is the causal normalized effect $\theta(g)$ multiplied by the ratio of true to imputed counterfactual earnings, the $\mathrm{Bias}$ term in Theorem \ref{thm:cp_iden_res}. Under NTD this ratio is also equal across genders, but it is multiplicative rather than additive, so it does not cancel when differencing.\footnote{Since the bias is multiplicative, if the true gender gap is zero then the descriptive gap is zero as well.}

To summarize, NTD does not allow identification of the gender gap in normalized effects in a single $2\times2$. Since the event-study estimator aggregates many such $2\times2$s, it inherits their biases and does not identify its target causal estimand when parallel trends is violated. 

\subsection{A Bias-Bounding Exercise}\label{sec:bias_bound}
Theorem \ref{thm:cp_iden_res} shows the conventional target is not identified when parallel trends are violated, and Section \ref{sec:did_violation} argues that they are violated, mainly for early treatment groups (20-25) when compared to later treatment groups (30+). A question remains how large the resulting biases are.
I now develop a bias-bounding exercise for the conventional estimator, building on a bias-correction approach that assumes fathers' effects are known. Under a plausible range of fathers' effects, empirical results suggest substantial bias for earlier treatment groups, 24-51\%, with the conventional estimator understating child penalties. For later treatment groups, the exercise provides no evidence of bias.

\paragraph{Theory.}
The following result formalizes a bias-correction approach to identify the conventional causal estimand, the gender gap in normalized effects.

\begin{proposition}\label{prop:bias_correction}
Under the setup of Theorem \ref{thm:cp_iden_res}, if $\APO(m,\infty)$ is known, then
\begin{equation*}
    \theta(f)-\theta(m)=\big[\delta_{\theta}(f)-\delta_{\theta}(m)\big]\frac{\delta_{\mathrm{APO}}(m)}{\APO(m,\infty)}.
\end{equation*}
\end{proposition}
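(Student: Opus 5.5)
The plan is to combine Theorem~\ref{thm:cp_iden_res} with the first identity of Lemma~\ref{lemma:desc_tie_cause}, and then invert the multiplicative bias. Under the setup of Theorem~\ref{thm:cp_iden_res}, we have
\[
\delta_{\theta}(f)-\delta_{\theta}(m)=\mathrm{Bias}\times\big[\theta(f)-\theta(m)\big],
\]
and the bias factor may be written using the male expression, $\mathrm{Bias}=\APO(m,\infty)/[\APO(m,\infty)-\gamma_{\mathrm{PT}}(m)]$. The key observation is that the denominator of this factor is a descriptive object. By \eqref{eq:app_proof_cp_ass_1} in Lemma~\ref{lemma:desc_tie_cause}, $\delta_{\mathrm{APO}}(m)=\APO(m,\infty)-\gamma_{\mathrm{PT}}(m)$. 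Hence
\[
\mathrm{Bias}=\frac{\APO(m,\infty)}{\delta_{\mathrm{APO}}(m)}.
\]
This is identified once $\APO(m,\infty)$ is known, because $\delta_{\mathrm{APO}}(m)$ is a function of observed means only.

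The steps, in order, are as follows.
\begin{enumerate}
\item Verify that the hypotheses of Lemma~\ref{lemma:desc_tie_cause} hold under the setup of Theorem~\ref{thm:cp_iden_res}, namely no anticipation and $\APO(m,\infty)-\gamma_{\mathrm{PT}}(m)\neq 0$. Both are assumed there, with the control group satisfying $d'>a$.
\item Substitute the expression for $\delta_{\mathrm{APO}}(m)$ into the bias factor.
\item Note that $\mathrm{Bias}\neq 0$, since Assumption~\ref{A.ntd_pt} requires $\APO(m,\infty)\neq 0$ and the denominator is nonzero by assumption.
\item Divide both sides of the Theorem~\ref{thm:cp_iden_res} identity by $\mathrm{Bias}$, i.e.\ multiply by $\delta_{\mathrm{APO}}(m)/\APO(m,\infty)$. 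This gives the stated formula.
\end{enumerate}

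No step is a genuine obstacle; the argument is a two-line rearrangement. The only point requiring care is the third step, confirming that the bias factor is nonzero and finite so that the inversion is legitimate. This follows from the nonvanishing conditions already imposed in Assumption~\ref{A.ntd_pt} and Theorem~\ref{thm:cp_iden_res}.

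A second point worth stating is why only the male quantity is needed. Theorem~\ref{thm:cp_iden_res} shows the two gender-specific expressions for $\mathrm{Bias}$ coincide under NTD, so knowledge of $\APO(m,\infty)$ alone pins down the common factor. Equivalently, knowledge of fathers' effects pins down $\APO(m,\infty)=\APO(m,d)-\ATE(m)$, since $\APO(m,d)$ is observed by consistency.
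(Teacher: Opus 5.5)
Your proposal is correct and matches the paper's proof: the paper likewise uses \eqref{eq:app_proof_cp_ass_1} from Lemma~\ref{lemma:desc_tie_cause} to rewrite the male form of the bias factor as $\APO(m,\infty)/\delta_{\mathrm{APO}}(m)$, then multiplies the identity in Theorem~\ref{thm:cp_iden_res} by its inverse. Your explicit check that this factor is nonzero and finite, so that the inversion is legitimate, is a small detail the paper leaves implicit.
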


\noindent
For a sketch of the proof, note that if fathers' counterfactual APO is known, then the parallel trends violation for fathers is identified, which in turn identifies the $\mathrm{Bias}$ term in Theorem \ref{thm:cp_iden_res}. The result follows by multiplying the expression in Theorem \ref{thm:cp_iden_res} by the inverse of the bias term. Note that since observed earnings identify the APO under realized treatment ($\APO(g,d)$), an assumption on $\APO(m,\infty)$ is equivalent to an assumption on fathers' normalized effects ($\theta(m)$), and vice-versa.

An immediate limitation is that fathers' counterfactual APO is not known. I therefore use the bias-correction approach as a diagnostic tool. Since both the conventional and bias-corrected estimators target the same causal estimand, assuming a plausible range for fathers' effects allows us to bound the bias in the conventional estimator. Specifically, if the true effect lies within an assumed range, comparing the conventional estimator to the bias-corrected estimator provides a measure of bias magnitude. Below, I empirically implement this exercise using a range from $-10\%$ to $+10\%$ for fathers' normalized effects. To my knowledge, there are two quasi-experimental studies that exploit the random success of in vitro fertilization (IVF) treatments and study fathers, and find no significant earnings effect for men \citep{lundborg2024there} or, if anything, a small positive effect \citep{bensnes2023reconciling}.\footnote{An alternative is to bound the parallel-trends violation using pre-treatment violations \citep{rambachan2023more}. The human capital mechanism of Section~\ref{sec:did_violation} suggests against this approach in the considered context: at treatment-relevant ages the later-treated control group transitions from schooling to employment, so violations can change sign and grow at treatment onset, making pre-treatment violations an unreliable benchmark for post-treatment ones. I therefore do not bound fathers' effects using pre-trends diagnostics.}

\paragraph{Empirical Application.}
Figure~\ref{fig:ntd_robustness} implements the bias-bounding exercise separately by treatment group. The black series (``No Assumption'') shows the conventional estimator: the gender gap in normalized DID. The shaded band shows the range of bias-corrected estimates from Proposition~\ref{prop:bias_correction} as $\theta(m)$ varies over $[-0.1, 0.1]$. Estimators are constructed using sample analogs of population expectations, and standard errors are constructed using influence functions and clustered at the individual level. See Appendix~\ref{sec:appendix_no_covariates} for further discussion on estimators and standard errors.

\begin{figure}[ht!]
    \centering
    \includegraphics[width=\textwidth]{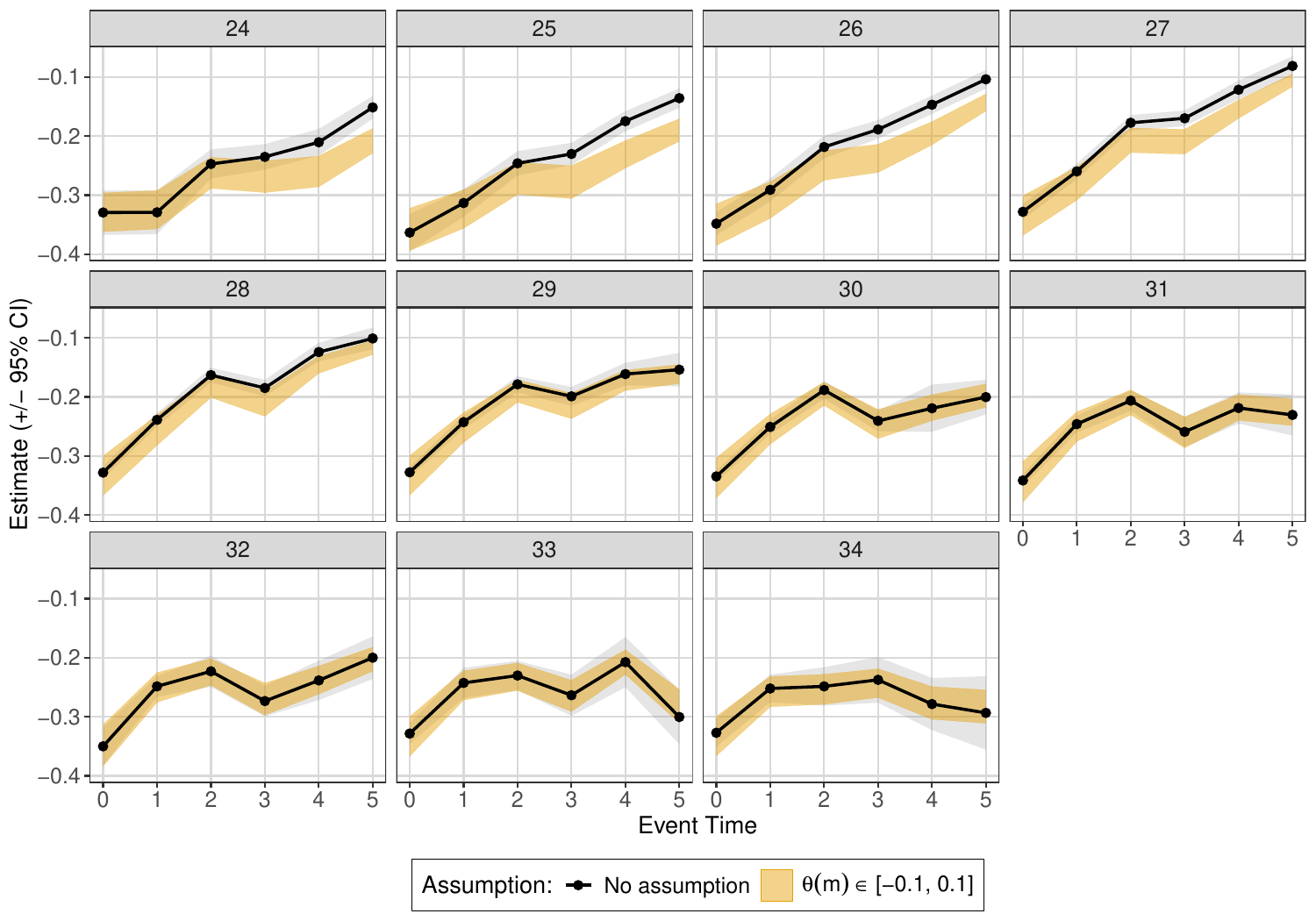}
    \caption{Bias-Bounding Diagnostic of the Conventional Estimator. \textit{Notes:} The figure presents the bias-bounding exercise by treatment group (facets). The black line (``No Assumption'') shows the conventional estimator, the gender gap in normalized DID, without imposing any value for $\theta(m)$, with 95\% confidence intervals in gray. The shaded band shows the range of bias-corrected estimates as $\theta(m)$ varies over $[-0.1, 0.1]$, building on the discussion in Section \ref{sec:bias_bound}.}
    \label{fig:ntd_robustness}
\end{figure}

The results suggest the bias is heterogeneous across treatment groups. At five years post-treatment, the conventional estimator for earlier treatment groups ($D=24,25,26$) is less negative than bias-corrected estimates throughout the assumed range of $\theta(m)$. To illustrate, consider treatment group $D=26$: the conventional estimate is $-0.104$ (SE $0.008$). Assuming $\theta(m)=-0.10$ yields a bias-corrected estimate of $-0.129$, while $\theta(m)=0.10$ yields $-0.157$—that is, $24\%$ and $51\%$ larger in magnitude, respectively. In contrast, for later treatment groups ($D\geq28$), conventional and bias-corrected estimates are similar, providing no evidence of bias.\footnote{The assumed range also carries an implication for the male DID itself. For early treatment groups the male normalized DID is large: $-0.27$ for $D=26$ five years post-treatment. Under no parallel-trends violation, this implies fathers lose 27\% of counterfactual earnings due to parenthood, far outside the quasi-experimental evidence. In contrast, assuming fathers' true effect is $-10\%$, an observed effect of $-0.27$ implies a normalized parallel trends violation of $-24\%$. }
At face value, the results indicate that for earlier treatment groups the conventional estimator attenuates child penalties toward zero. This is consistent with Section~\ref{sec:did_violation}, which argued parallel trend violations are negative ($\gamma_{\mathrm{PT}}(g)<0$) when comparing early to late treatment groups. Under NTD, $\gamma_{\mathrm{PT}}(g)<0$ causes child penalties to be understated (Theorem~\ref{thm:cp_iden_res}).

\section{Identification of the Effect of Parenthood on the Gender Earnings Ratio}\label{sec:alt_estimand}
The above discussion established that under NTD the gender gap in normalized effects is not identified when parallel trends fail. This section establishes a new identification result: the effect of parenthood on the gender earnings ratio is identified under NTD. Estimates of this new estimand show a novel empirical finding: the extent parenthood accounts for gender inequality in labor market earnings diminishes for older treatment groups. I conclude with comparing between the new and conventional estimators.

\subsection{Identification and Estimation}\label{sec:iden_ntd_new}
Let $\rho(d,d^\prime,a)=\tfrac{\APO(f,d,d^\prime,a)}{\APO(m,d,d^\prime,a)}$ denote the gender earnings ratio for treatment group $d$ at age $a$ under counterfactual treatment timing $d^\prime$. In suppressed notation for a single $(d,d^\prime,a)$ I write $\rho(d,d^\prime,a)\equiv\rho(d^\prime)$, so that $\rho(d)\equiv\rho(d,d,a)$ is the realized ratio and $\rho(\infty)\equiv\rho(d,\infty,a)$ the counterfactual ratio absent childbirth. The next result shows the effect of parenthood on the gender earnings ratio, $\rho(d)-\rho(\infty)$, is identified under NTD.

\begin{theorem}\label{thm:ntd_new_estimand}
Assume the same setup as in Theorem~\ref{thm:cp_iden_res}. Then
\[
\rho(d)-\rho(\infty)=\frac{\mathbb{E}[Y_a\mid G=f,D=d]}{\mathbb{E}[Y_a\mid G=m,D=d]}
-\frac{\delta_{\mathrm{APO}}(f)}{\delta_{\mathrm{APO}}(m)}.\]
\end{theorem}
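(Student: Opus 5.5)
Both sides of the claimed equality split into a realized-ratio term and a counterfactual-ratio term, so I will match them one at a time.

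\emph{Realized term.} By consistency, $Y_{i,a}=Y_{i,a}(D_i)$, so $\APO(g,d)=\mathbb{E}[Y_a(d)\mid G=g,D=d]=\mathbb{E}[Y_a\mid G=g,D=d]$. Hence $\rho(d)=\APO(f,d)/\APO(m,d)$ equals the first fraction on the right-hand side, provided $\mathbb{E}[Y_a\mid G=m,D=d]\neq0$. This nonvanishing condition is implicit in $\rho(d)$ being defined, and I would state it explicitly.

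\emph{Counterfactual term.} The goal is $\rho(\infty)=\delta_{\mathrm{APO}}(f)/\delta_{\mathrm{APO}}(m)$, and I would argue it in three steps.
\begin{enumerate}
\item Apply equation~\eqref{eq:app_proof_cp_ass_1} of Lemma~\ref{lemma:desc_tie_cause}, which gives $\delta_{\mathrm{APO}}(g)=\APO(g,\infty)-\gamma_{\mathrm{PT}}(g)$ for each gender.
\item Factor out the counterfactual level: $\delta_{\mathrm{APO}}(g)=\APO(g,\infty)\,[1-\gamma_{\mathrm{PT}}(g)/\APO(g,\infty)]$. This uses $\APO(g,\infty)\neq0$, which is part of Assumption~\ref{A.ntd_pt}.
\item Define $k\equiv\gamma_{\mathrm{PT}}(g)/\APO(g,\infty)$. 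By Assumption~\ref{A.ntd_pt}, $k$ is the same for $g=f$ and $g=m$. Hence
\[
\frac{\delta_{\mathrm{APO}}(f)}{\delta_{\mathrm{APO}}(m)}=\frac{\APO(f,\infty)(1-k)}{\APO(m,\infty)(1-k)}=\rho(\infty).
\]
The cancellation is valid because $1-k\neq0$, which follows from the maintained condition $\APO(g,\infty)-\gamma_{\mathrm{PT}}(g)\neq0$ of Theorem~\ref{thm:cp_iden_res}.
\end{enumerate}
Subtracting the counterfactual term from the realized term gives the claimed expression.

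\emph{Main point of care.} The key step is checking that Lemma~\ref{lemma:desc_tie_cause} applies under the setup of Theorem~\ref{thm:cp_iden_res}. The lemma requires the control group to satisfy $d'>a$, which is given. It also requires no anticipation for the control group $d'$ at ages $d-1$ and $a$, both of which lie before $d'$. Under that assumption, the observed control-group trend $\mathbb{E}[Y_a-Y_{d-1}\mid G=g,D=d']$ equals the counterfactual trend $\APO(g,d',\infty,a)-\APO(g,d',\infty,d-1)$. Likewise, no anticipation for group $d$ at age $d-1$ makes $\mathbb{E}[Y_{d-1}\mid G=g,D=d]$ equal to $\APO(g,d,\infty,d-1)$. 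Adding these and comparing with the definition of $\gamma_{\mathrm{PT}}$ yields~\eqref{eq:app_proof_cp_ass_1}.

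I would note that no-anticipation assumption as inherited from Theorem~\ref{thm:cp_iden_res}. The rest is algebra: the multiplicative bias, which spoils the gender gap in Theorem~\ref{thm:cp_iden_res}, cancels exactly in a ratio.
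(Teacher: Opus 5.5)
Your proposal is correct and follows essentially the paper's route. The paper rewrites Assumption~\ref{A.ntd_pt} as equality across genders of $\bigl[\APO(g,\infty)-\gamma_{\mathrm{PT}}(g)\bigr]/\APO(g,\infty)$, then substitutes definitions to get $\delta_{\mathrm{APO}}(f)/\delta_{\mathrm{APO}}(m)=\rho(\infty)$, which is your factor-and-cancel step via \eqref{eq:app_proof_cp_ass_1}; it uses consistency for the realized ratio, as you do. You also make explicit two things the paper leaves implicit: the nonvanishing conditions, and the no-anticipation requirement for the control group $d'$ at ages $d-1$ and $a$.
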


\noindent
The proof is provided in Appendix~\ref{sec:app_ident_proofs}. A sketch of the proof is as follows. The gender earnings ratio under the realized treatment $\rho(d)$ is identified directly from observed data by consistency, while the counterfactual ratio $\rho(\infty)$ is identified via the gender ratio of $\delta_{\mathrm{APO}}$ under Assumption~\ref{A.ntd_pt}.

Estimation follows directly from Theorem~\ref{thm:ntd_new_estimand}: replace population expectations with sample means to obtain estimates for the observed earnings ratio and the $\delta_{\mathrm{APO}}$ gender ratio, then difference the two ratios to construct the final estimator. For inference, I derive clustered standard errors based on the estimator's influence function in Appendix~\ref{sec:appendix_no_covariates}.

\subsection{Empirical Evidence}\label{sec:new_ntd_empirical}

\paragraph{Validation Tests.}
Section~\ref{sec:violations_validation_did} presented validation tests for DID. Building on that discussion, Figure~\ref{fig:validation_ntd} reports pre-treatment estimates of $\delta_{\theta}(f)-\delta_{\theta}(m)$ by treatment-control pairs, which serves as a validation test for Assumption~\ref{A.ntd_pt} (Proposition~\ref{prop:cp_ident_ass}). For mid-range treatment groups $D=26,\ldots,30$ the normalized pre-trends are small and, unlike the DID pre-trends, do not widen with the treatment-control treatment timing gap. I therefore focus on these treatment groups when estimating the new estimand below.

\begin{figure}[t!]
    \centering
    \includegraphics[width=\textwidth]{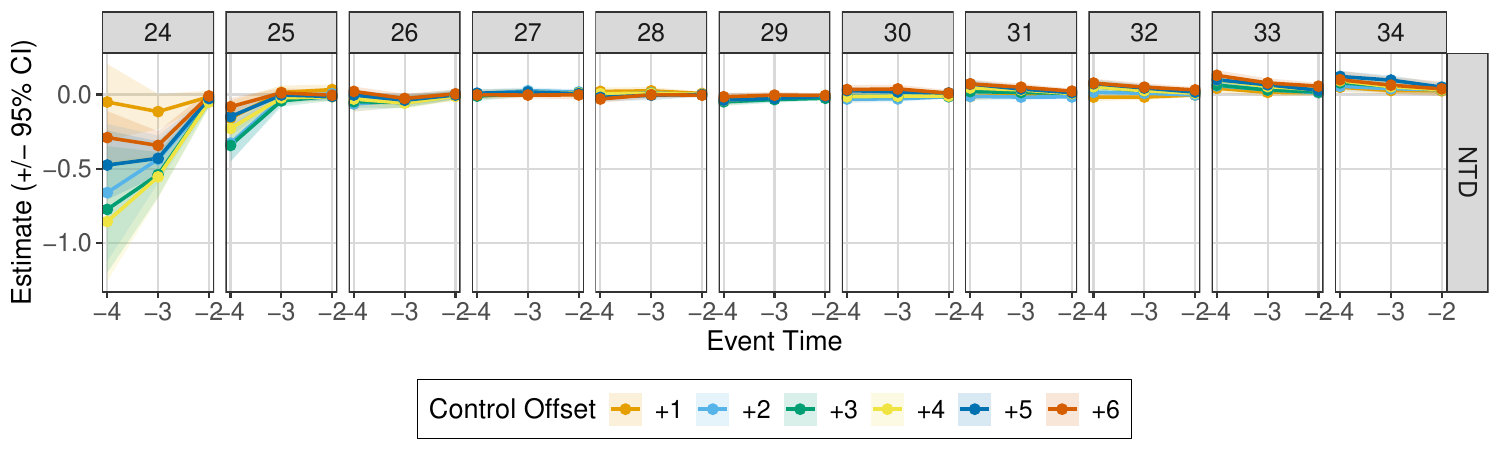}
    \caption{NTD Validation Tests by Treatment-Control Pairs. \textit{Notes:} The figure presents pre-treatment estimates of the gender gap in normalized DID, $\delta_{\theta}(f)-\delta_{\theta}(m)$. For further notes see Figure \ref{fig:validation_pretrends}.}
    \label{fig:validation_ntd}
\end{figure}

\paragraph{Estimating the Effect of Parenthood on Gender Inequality.}
The new estimand can be visualized as the difference between two gender earnings ratios. Figure~\ref{fig:alt_ntd_a} reports post-treatment estimates of the counterfactual gender earnings ratio $\frac{\delta_{\mathrm{APO}}(f)}{\delta_{\mathrm{APO}}(m)}$ in purple and the observed gender earnings ratio $\frac{\mathbb{E}[Y_a\mid G=f,D=d]}{\mathbb{E}[Y_a\mid G=m,D=d]}$ in green, separately by treatment group. Under NTD, the purple series identifies the gender earnings ratio in the counterfactual absent childbirth ($\rho(\infty)$), and the green series identifies the ratio under realized treatment ($\rho(d)$). The difference between these series estimates the effect of parenthood on the gender earnings ratio, reported in Figure~\ref{fig:alt_ntd_b} via the orange series.

\begin{figure}[t!]
    \centering
    \begin{subfigure}[b]{\textwidth}
        \centering
        \caption{Components of the New Estimator}
        \label{fig:alt_ntd_a}
        \includegraphics[width=\textwidth]{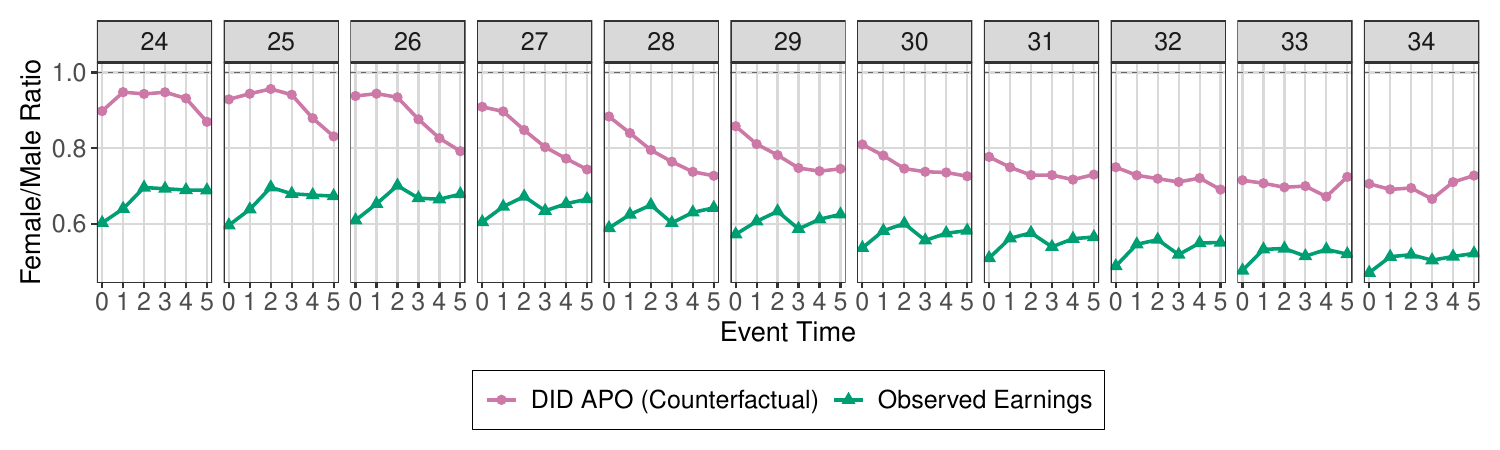}
    \end{subfigure}

    \vspace{1em}

    \begin{subfigure}[b]{\textwidth}
        \centering
        \caption{Estimators of the Effect of Parenthood}
        \label{fig:alt_ntd_b}
        \includegraphics[width=\textwidth]{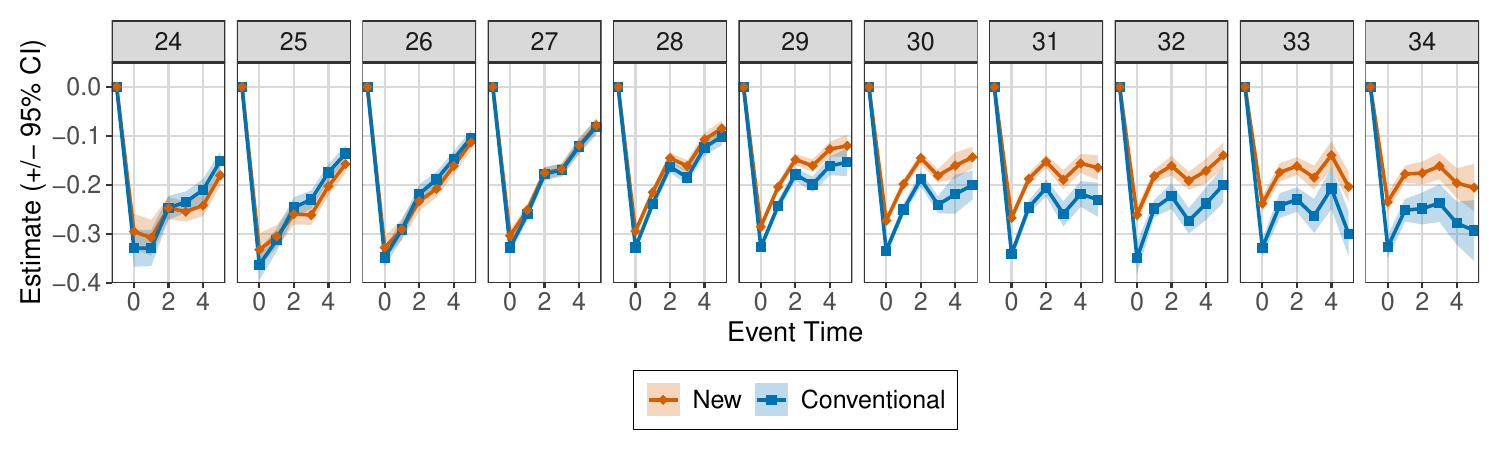}
    \end{subfigure}

    \caption{Estimates of the Effect of Parenthood on the Gender Earnings Ratio. \textit{Notes:} The x-axis represents event time: years since first childbirth. Panel~(a) presents the DID-imputed counterfactual gender earnings ratio (purple circles) and the observed gender earnings ratio (green triangles). Their difference estimates the effect of parenthood on the gender earnings ratio, reported in panel~(b) (orange diamonds) alongside the conventional estimator (blue squares), the gender gap in normalized DID.}
    \label{fig:alt_ntd}
\end{figure}

Figure~\ref{fig:alt_ntd_a} shows that following childbirth, the gender ratio of observed earnings falls sharply relative to the counterfactual ratio in all treatment groups. The gap, which corresponds to the estimator of the effect of parenthood on the gender earnings ratio and is reported in Figure~\ref{fig:alt_ntd_b}, persists throughout the post-treatment horizon. For example, for treatment group $D=30$, parenthood reduces the gender earnings ratio by $14.4$ percentage points (SE $1.1$) five years post-childbirth.

An important accounting exercise in the child penalty literature is to quantify how much of the overall gender gap is attributable to parenthood \citep[see, e.g.,][]{kleven2019child,cortes2023children}. This can be achieved directly using the values reported in Figure~\ref{fig:alt_ntd_a}. To illustrate, consider treatment group $D=30$ at the age of first childbirth, i.e., age 30. The estimated counterfactual gender earnings ratio is $0.810$ and the realized gender earnings ratio is $0.537$, implying overall gender inequality of $1 - 0.537 = 0.463$. Of this, parenthood accounts for $0.810 - 0.537 = 0.273$, or approximately $59\%$, while other factors account for the remaining $41\%$. Conducting an equivalent decomposition for multiple treatment groups reveals substantial heterogeneity: at the age of first childbirth, parenthood's share ranges from 84\% for $D=26$ down to 59\% for $D=30$.

\subsection{Comparing to the conventional estimator}\label{sec:ntd_new_discussion}
I finish with a discussion comparing the new estimator to the conventional estimator. 

\paragraph{Theory.}
The relationship between the new causal estimand, the effect of parenthood on the gender earnings ratio, and the conventional causal estimand, the gender gap in normalized effects, can be expressed as:
\begin{equation}\label{eq:ntd_base_to_alt}
\rho(d)-\rho(\infty)
= \frac{\rho(\infty)}{1 + \theta(m)}
\big[\theta(f) - \theta(m)\big].
\end{equation}
Equation~\eqref{eq:ntd_base_to_alt} expresses the new estimand as the conventional estimand scaled by the factor $\rho(\infty)/(1+\theta(m))$. The two coincide when this factor equals one, and the new estimand is smaller in absolute value whenever the factor is below one.
For example, if women and men earn the same absent childbirth and childbirth reduces women's earnings by 50\% while leaving men's unchanged, the factor equals one and both estimands equal $-50$ percentage points (pp): $-50\%-0\%$ for the conventional and $100\%\times0.5-100\%$ for the new.\footnote{To compute the new estimand by hand: let $M$ and $F$ be men's and women's earnings absent childbirth, so $\rho(\infty)=F/M$. Childbirth scales these to $M(1+\theta(m))$ and $F(1+\theta(f))$, so the realized ratio is $\rho(d)=\frac{F(1+\theta(f))}{M(1+\theta(m))}=\rho(\infty)\frac{1+\theta(f)}{1+\theta(m)}$.}
If instead women earn only 80\% of men absent childbirth, with men's earnings still unaffected, the factor is smaller than one: the conventional estimand is again $-50$ pp, while the new is $80\%\times0.5-80\%=-40$ pp.
Similarly, if instead women and men earn the same absent childbirth but childbirth raises men's earnings by 20\%, the factor is again smaller than one, and the conventional estimand is $-50\%-20\%=-70$ pp, while the new is $100\%\times0.5/1.2-100\%\approx-58$ pp.

\paragraph{Empirical Evidence.}
I now discuss the estimates from two angles: evidence on the counterfactual gender earnings ratio $\rho(\infty)$, which by \eqref{eq:ntd_base_to_alt} governs how the new estimand relates to the conventional estimand, and a direct comparison with the conventional estimator.

To study how counterfactual gender inequality $\rho(\infty)$ evolves across the life cycle and treatment groups, I compute mean earnings at every age $a<d$ for each treatment group $d$ and form the female-to-male observed earnings ratio. Under the no-anticipation assumption (Assumption~\ref{A.no.ant.1}), these pre-childbirth earnings identify the counterfactual $\APO$s.

Figure~\ref{fig:rho_lifecycle} plots the female-to-male observed earnings ratio for each treatment group $d\in[27,38]$. Two patterns emerge. First, within each treatment group the gender earnings ratio before childbirth follows a similar life-cycle pattern: rising up to age 27 and falling from age 28, producing an inverted U-shape. Second, at any given age, parents who delay their first birth generally exhibit higher ratios than earlier-childbearing parents. Importantly, the life-cycle pattern is such that at later ages all groups have a gender earnings ratio smaller than one. Under no anticipation, this implies $\rho(d,\infty,a) < 1$ at these ages.

\begin{figure}[t!]
    \centering
    \includegraphics[width=\textwidth]{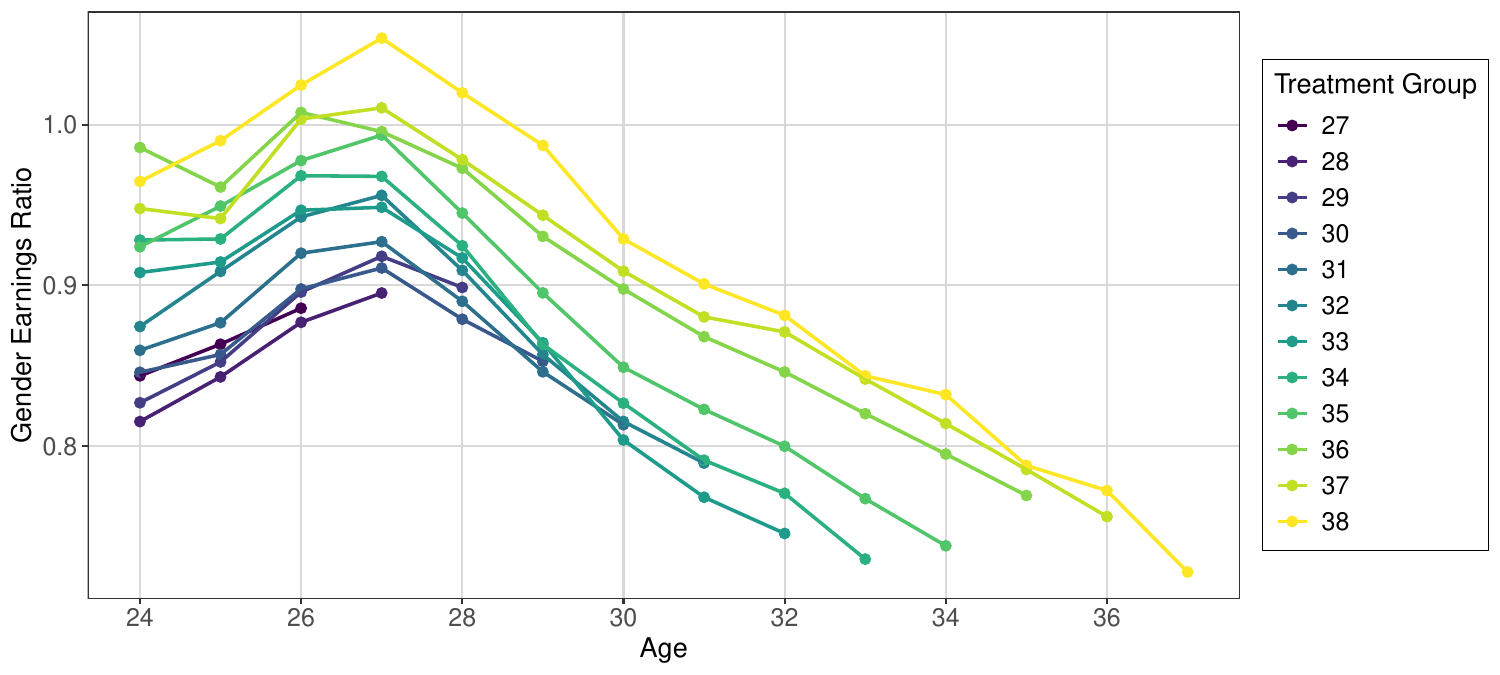}
    \caption{Observed Gender Earnings Ratios Before Childbirth. \textit{Notes:} The figure presents the ratio of mean female to mean male earnings (y-axis) for pre-treatment ages separately by treatment group (colors). The sample is restricted to treatment groups $D\in[27,38]$. The dashed horizontal line indicates gender equality in earnings.}
    \label{fig:rho_lifecycle}
\end{figure}

Turning to the comparison, Figure~\ref{fig:alt_ntd_b} plots both the conventional and new estimators.
For $D=26,27$, the new estimator starts slightly less negative than the conventional at childbirth but converges and becomes marginally more negative by five years post-treatment. For $D=29,30$, the new estimator is smaller in magnitude than the conventional. For example, at $D=30$ five years post-treatment, the conventional estimator equals $-0.200$ (SE $0.015$) while the new estimator equals $-0.144$ (SE $0.011$), i.e., 28\% smaller in magnitude.

The intuition behind this result for later treatment groups is as follows. The bias-bounding exercise in Section~\ref{sec:bias_bound} found no evidence of substantial bias for later treatment groups. Hence, the differences observed here likely reflect different target estimands rather than bias. Furthermore, the smaller magnitude of the new estimator is consistent with small effects for fathers and $\rho(\infty) < 1$, as discussed theoretically in Section~\ref{sec:ntd_new_discussion} using \eqref{eq:ntd_base_to_alt}.

\section{Aggregation}\label{sec:aggregation}

Applied work typically reports estimates aggregated across all treatment groups, as in the normalized event studies (Section~\ref{sec:event_studies}). This section discusses how to define aggregate estimands in the context of child penalties, building on the discussion of aggregation in \citet{callaway2021difference}, and highlights challenges in comparing aggregates across strata, such as countries or parent types, due to differences in treatment distributions.

\subsection{Aggregate Causal Estimands}\label{sec:agg_estimands}

A natural aggregate by event time $e$ across treatment groups of the effect of parenthood on the gender earnings ratio is
\begin{equation}\label{eq:agg_rho}
\rho_{\mathrm{Agg}}(e)=\mathbb{E}_D\big[ \rho(D,D,D+e) - \rho(D,\infty,D+e) \mid D+e<D_{\max}\big].
\end{equation}
$\rho_{\mathrm{Agg}}(e)$ is a weighted average of treatment-group-specific effects on the gender earnings ratio at event time $e$, using the treatment distribution as weights, where $D_{\max}$ is the largest treatment group considered.\footnote{Comparing values across $e$ combines variation in causal effects with shifts in treatment group composition, as noted by \citet{callaway2021difference}. This can be addressed by conditioning on a maximum exposure length $e'$, and comparing $e_1, e_2 \leq e'$.}

For the conventional normalized effect $\theta(g)$, aggregation introduces additional subtleties. Within gender $g\in\{f,m\}$, a natural aggregate across treatment groups by event time is
\begin{align*}
\theta_{\mathrm{Agg},1}(g, e) &= \mathbb{E}_D\left[\theta(g, D, D+e) \mid G = g, D+e < D_{\max}\right].
\end{align*}
$\theta_{\mathrm{Agg},1}(g, e)$ is a weighted average of normalized effects for gender $g$ at event time $e$.
However, the event study estimator $\widehat{\theta}_{\mathrm{ES}}(g, e)$ in Equation~\eqref{eq:cp_main_res} is a ratio of two averages. Hence, an aggregate estimand that matches the structure of $\widehat{\theta}_{\mathrm{ES}}(g, e)$ is
\begin{equation*}
\theta_{\mathrm{Agg},2}(g, e) = \frac{\mathbb{E}_D\left[\ATE(g, D, D+e) \mid G = g, D+e < D_{\max}\right]}{\mathbb{E}_D\left[\APO(g, D, \infty, D+e) \mid G = g, D+e < D_{\max}\right]}.
\end{equation*}
Comparing, while $\theta_{\mathrm{Agg},1}(g,e)$ uses the treatment distribution to summarize effects, $\theta_{\mathrm{Agg},2}(g, e)$ effectively gives higher weight to treatment groups with higher counterfactual APOs, which are typically the later-treated, as their post-treatment periods occur in later parts of the life cycle compared to earlier treatment groups. To see this, let $p(g,d,e)=\Pr(D=d\mid G=g,\ D+e<D_{\max})$. Substituting $\ATE(g)=\theta(g) \APO(g,\infty)$ into $\theta_{\mathrm{Agg},2}(g, e)$ and simplifying, we get
\[
\theta_{\mathrm{Agg},2}(g,e)=\sum_{D:D+e<D_{\max}}w(g,D,e)\theta(g,D,D+e),
\]
where 
\[
w(g,d,e)=\frac{p(g,d,e)\APO(g,d,\infty,d+e)}{\sum_{d^{\prime}:d^{\prime}+e<D_{\max}}p(g,d^{\prime},e)\APO(g,d^{\prime},\infty,d^{\prime}+e)}.
\]
Hence $\theta_{\mathrm{Agg},2}$ is also a weighted average of normalized effects for gender $g$ at event time $e$, like $\theta_{\mathrm{Agg},1}(g, e)$, with higher weights to groups with higher counterfactual APOs. Since giving higher weights to higher earning treatment groups does not seem warranted, I see $\theta_{\mathrm{Agg},1}(g,e)$ as the preferable aggregate estimand.

A further question is on the order of aggregation and differencing: whether to aggregate within gender first and then take the difference, $\theta_{\mathrm{Agg},1}(f, e) - \theta_{\mathrm{Agg},1}(m, e)$, or take the within-group difference and aggregate across groups,
\begin{equation}\label{eq:agg_conv}
{\Delta\theta_{\mathrm{Agg}}}(e) = \mathbb{E}_D\big[\theta(f, D, D+e) - \theta(m, D, D+e) \mid D+e < D_{\max}\big].
\end{equation}
The two coincide only when male and female treatment distributions are identical. Below, I use $\Delta\theta_{\mathrm{Agg}}(e)$ for two reasons. First, to ease comparison to $\rho_{\mathrm{Agg}}(e)$ from \eqref{eq:agg_rho}, which uses the same weights as $\Delta\theta_{\mathrm{Agg}}(e)$. Second, $\theta_{\mathrm{Agg},1}(f, e) - \theta_{\mathrm{Agg},1}(m, e)$ confounds differences in treatment-group-specific effects with gender differences in treatment distribution.

Aggregate estimates for Israel, the UK, and Germany are reported in Appendix~\ref{sec:agg_estimates_app}, and are consistent with the single-treatment-group patterns of Section~\ref{sec:new_ntd_empirical}.

\subsection{Comparing Aggregate Estimates Across Strata}\label{sec:agg_crosscountry}
In practice, studies calculate aggregate estimates within a stratum, e.g., countries \citep[e.g.,][]{kleven2019child} or parent type \citep[e.g.,][]{andresen2022causes}, and compare across strata. The caveat in interpretation with such comparison, putting aside the obvious reasons of differing data samples, is that the difference may be affected by differences in treatment distributions. For example, countries where parents have children earlier place more weight on early treatment groups, potentially yielding different aggregate estimates than countries with later childbearing. 

Figure~\ref{fig:oecd_age_dist} documents cross-country variation in the distribution of age at first childbirth for six OECD countries: the United States and Poland exhibit a right-skewed distribution of first-birth ages; Denmark and Sweden are more centered; Italy and Spain are left-skewed.
Accordingly, the United States may display a different aggregate than Italy simply because it places greater weight on earlier treatment groups.

\begin{figure}[ht!]
    \centering
    \includegraphics[width=\textwidth]{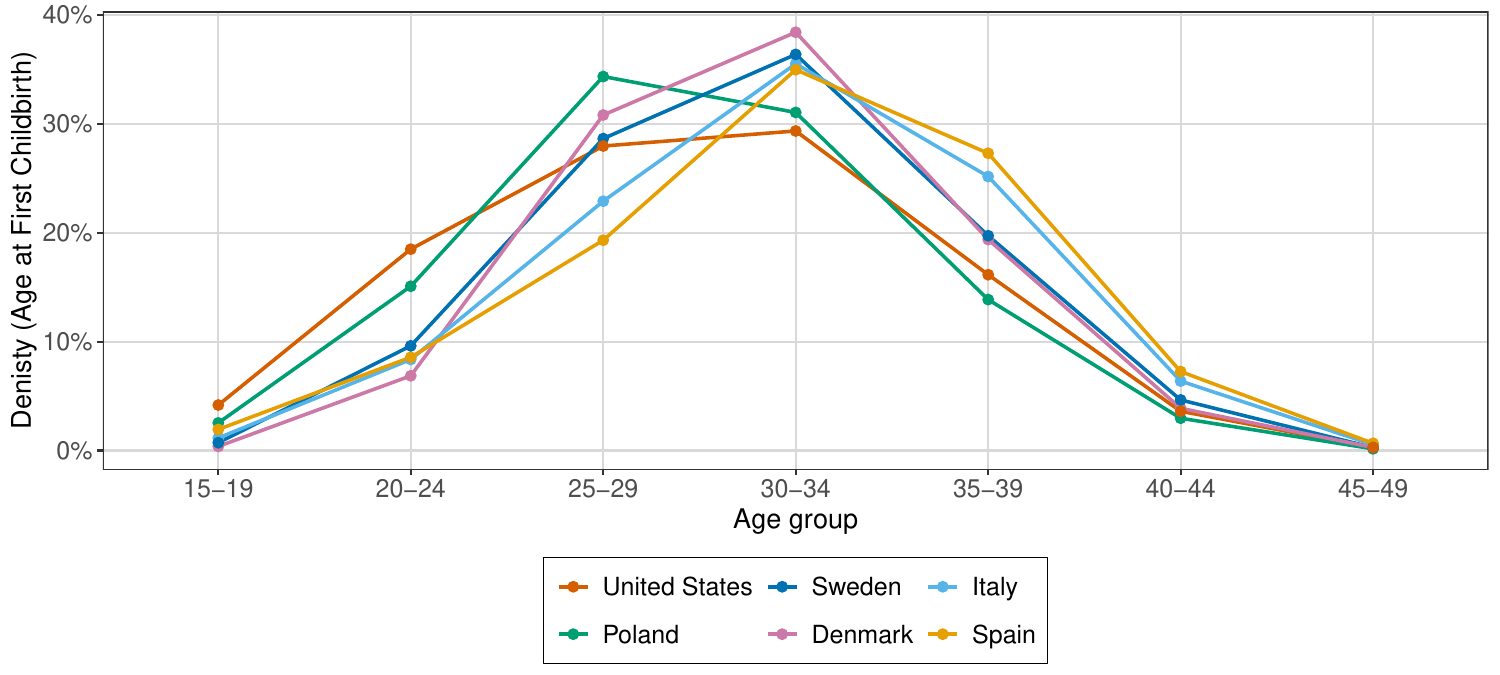}
    \caption{OECD Age at First Childbirth Distributions. \textit{Source:} OECD Family Database (Indicator SF2.3). \textit{Notes:} The figure shows the distribution of age at first childbirth for selected OECD countries in 2021. The OECD database reports fertility rates for first births by five-year age groups (15--19, 20--24, \ldots, 45--49). For each country, the age-group fertility rates are normalized to sum to one, so the vertical axis represents the within-country share of first births across age groups.}
    \label{fig:oecd_age_dist}
\end{figure}

To illustrate how differences in treatment distributions may affect results, Figure~\ref{fig:example_aggregates} reports aggregated estimates of the effect of parenthood on the gender earnings ratio, i.e., the estimated $\rho_{\mathrm{Agg}}(e)$, keeping the single-treatment-group estimates constant using the estimates of Section \ref{sec:alt_estimand}, and varying the treatment distribution. The three considered treatment distributions are motivated from the previous discussion on OECD distributions: one gives more weight to earlier treatment groups (right-skewed), the second gives more weight to mid-range treatment groups (centered), and the third places more weight on later treatment groups (left-skewed).

\begin{figure}[ht!]
    \centering
    \includegraphics[width=\textwidth]{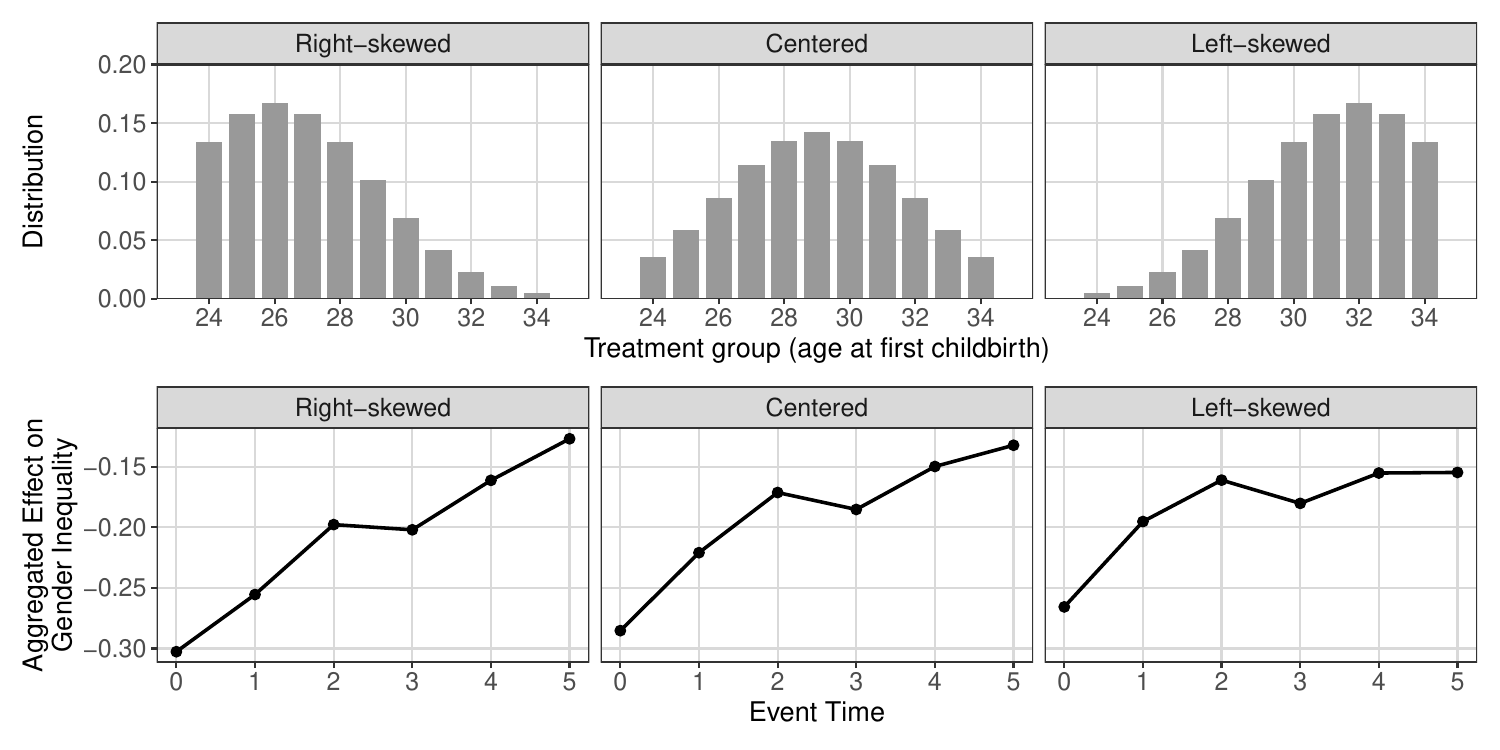}
    \caption{Aggregated Effects Under Three Hypothetical Treatment Distributions. \textit{Notes:} The figure shows estimates of the aggregated effect of parenthood on the gender earnings ratio across treatment groups $D\in[24,34]$ by treatment distribution (columns). Single-treatment-group estimates were calculated on the Israeli administrative data, as in Section~\ref{sec:alt_estimand}. The three treatment distributions are based on normal distributions with standard deviation 3 and means 27, 29 and 31, titled Right-skewed, Centered and Left-skewed, respectively. Aggregated estimates report weighted averages of single-treatment group estimates using the density of each treatment group as weights.}
    \label{fig:example_aggregates}
\end{figure}

The documented aggregated estimates show two different patterns. Under the right-skewed distribution, the effect of parenthood on gender inequality in earnings becomes smaller over time almost linearly. Fitting a line to these estimates, the effect would reach zero around event time 10. In contrast, under the left-skewed distribution the aggregate drops from 27\% to 16\% between event times zero and two, but then stays around 15\% up to event time five. That is, under this distribution, one can conjecture that the effect of parenthood on gender inequality does not recede over time. This exercise illustrates that when single-treatment-group effects are heterogeneous, differences in treatment distributions can substantially affect aggregated estimators.

\section{Conclusion}\label{sec:conclusion}

This paper revisits the identification of child penalties in the normalized event-study framework. 
Within gender, I argue that difference-in-differences estimates are biased by selection into the timing of parenthood.
Between genders, I articulate the identification assumptions underlying the normalized design, derived from the validation test used in applied work: parallel-trend violations, divided by counterfactual earnings, are equal across genders, which I term Normalized Triple Differences (NTD). 
I show that under NTD, the conventional target, the gender gap in normalized effects, is not identified when parallel trends is violated, and that the effect of parenthood on the gender earnings ratio is point identified.
Applied to Israeli data, the new estimator reveals heterogeneity across treatment groups: for example, at the age of first childbirth, parenthood accounts for 84\% of the observed gender earnings gap for $D=26$ but only 59\% for $D=30$. 
Finally, I discuss aggregation across multiple treatment groups and present two subtleties: the conventional aggregate implicitly weights treatment groups by their counterfactual earnings, and differences in fertility-timing distributions confound comparisons of aggregate estimates across strata, such as across countries.

There are multiple avenues for future research. 
For example, the analysis relies on both SUTVA and Assumption~\ref{A.no.ant.1}, and exploring how violations of these assumptions affect the results is an important next step.
Moreover, the discussion focused on earnings as the outcome. Extending the analysis under NTD to other outcomes, such as employment, hours worked, or firm-level dynamics is another important extension. 
 
\printbibliography

\appendix

\renewcommand{\thefigure}{\Alph{section}\arabic{figure}}
\renewcommand{\thetable}{\Alph{section}\arabic{table}}
\renewcommand{\theequation}{\Alph{section}\arabic{equation}}
\setcounter{figure}{0}
\setcounter{table}{0}
\setcounter{equation}{0}

\gdef\thesection{\Alph{section}}

\section{Validation Practices in the Child-Penalty Event-Study Literature}\label{sec:app_validation_lit}

This appendix documents how the child-penalty literature validates the gender-differenced normalized event study analyzed in this paper. I find that papers support a causal reading with between-gender parallel pre-trends.

I compile a broad, non-exhaustive set of empirical papers that estimate child penalties by comparing mothers and fathers in a normalized event study around the first birth, following \citet{kleven2019children}. Not included are papers that identify child penalties by other means, e.g., parents versus childless individuals, birth-cohort comparisons, within-couple (spousal) differences, instrument-based designs, and pseudo-panel constructions, since the identification results in this paper concern the mother-versus-father normalized event study comparison specifically.

These papers show that mothers' and fathers' pre-birth trajectories are parallel to each other, which is the testing procedure that I study when analyzing the identification assumptions in Section~\ref{sec:iden_ass}. Direct quotes per paper are provided below.

\begin{itemize}
\item \citet{kleven2019children}: ``We see that, once life-cycle and time trends are taken out, the earnings of men and women evolve in an almost parallel fashion until parenthood. But at the precise moment the first child arrives, the earnings paths of men and women diverge'' (p.~189).
\item \citet{kleven2019child}: ``In each country, the earnings of men and women evolve similarly before parenthood---after adjusting for life cycle and time trends---but diverge sharply after parenthood'' (p.~123).
\item \citet{kleven2021does}: ``Consider first biological families. Relative to the underlying life cycle and time trends, the earnings of men and women evolve in parallel until childbirth and then diverge sharply'' (p.~189); ``Consider then adoptive families. The main insight from panel A of Figure 1 is that adoptive families are affected by parenthood in much the same way as biological families. The earnings of adoptive parents evolve in parallel before having children and then diverge sharply and persistently after having children'' (pp.~190--191).
\item \citet{cortes2023children}: ``Importantly, men's and women's earnings evolve similarly in the years prior to parenthood, supporting the causal interpretation that the earnings losses experienced by women are indeed the result of children'' (p.~1364).
\item \citet{gould2024child}: ``Figure 1 illustrates the estimated impact of the first child on men's and women's earnings. Panel A shows that, according to the administrative panel data, the trends in earnings (relative to the base period in $t=-1$) for men and women are similar before the onset of parenthood'' (p.~12).
\item \citet{dequinto2021child}: ``Hence, the separate earnings dynamics for mothers and fathers should be interpreted with caution. The second step (the child penalty), by contrast, relies on the milder smoothness assumption of event studies. This assumption is validated with tests of parallel trends before childbirth, i.e., the wage (or other outcome) trajectories of men and women being parallel for $t < 0$. Even under parallel trends, two remarks are in order: first, this approach requires no anticipation, as women who anticipate that next-year's earnings will decline might decide (to try) to become pregnant; second, the smoothness around the date of birth may be less informative as we consider periods further away from the event'' (p.~595).
\item \citet{sieppi2019parenthood}: ``According to Kleven et al. (2018), the approach provides a plausible method to identify the causal effect of parenthood'' (p.~6); ``The presence of parallel pre-trends in outcomes provides robustness for the analysis that uses men as a control group for women'' (p.~6).
\item \citet{hotz2017parenthood}: ``As one can see in Figure 1-A, there is not a sizable difference in the trends in average wage rates across gender in the years preceding first births, but immediately after the first birth women's wages fall behind males' wages (which do not change after having their first child)'' (p.~13).
\item \citet{bazen2021measuring}: ``Figures 3a to 3d present the effects of parenthood on outcomes in the labour market, along with a 95\% confidence interval. The presence of parallel pre-trends in outcomes provides robustness for the analysis that uses men as a control group for women. In the years prior to the birth of the first child there is no significant difference between the trends of the variable for future mothers and fathers'' (pp.~10--11).
\item \citet{meurs2019gender}: ``Ideally, the estimated coefficients for this panel should be zero before year zero (no pre-trend), which would lend weight to the interpretation that observed divergences would be only linked the `birth' event (cf.\ Box 3). This condition is not met (the coefficients are significantly different from zero), neither for women nor for men. However, the trends for women and men are parallel, suggesting that the widening gap between women and men is indeed related to the entry into parenthood. Moreover, the magnitude of the pre-trend is very small compared to the changes that follow the event'' (p.~125).
\item \citet{rabate2022determines}: ``To evaluate whether the exogeneity of timing is a plausible assumption, we examine the pre-trends in the event study'' (p.~202); ``For all outcomes considered (earnings in panel A, employment in panel B, hours worked in panel C, wage rate in panel D), fathers and mothers exhibit a similar pattern before the birth. Even if the level is higher for men for all outcomes except for employment, the evolution is roughly similar for fathers and mothers. Moreover, the pre-birth trends seem to be stable, providing first evidence that there is no adjustment in labour market outcomes pre-birth that would be problematic for identification'' (p.~202).
\item \citet{berniell2021gender}: ``Although employment trajectories exhibit positive pretrends, the estimated coefficients are not statistically significant for mothers before $\tau = 0$. In any case, pretrends of men and women are parallel, and while the positive trend for men continues for a couple of years after becoming fathers, for women there is a drastic break that coincides with the year of birth of the first child. As discussed in Section 2.2, this suggests that if we were willing to think of fathers as the control group for mothers, our results underestimate the impact of motherhood on women's employment'' (p.~9, fn.~21).
\item \citet{butikofer2018role}: ``We formally test whether the prebirth trends are parallel for each panel in Figs.\ 5 and 6. Note that Fig.\ 6 is based on the sample that we use in the event-study analyses. For each graduate degree, we interact gender with each of the three prebirth periods and jointly test whether the coefficients are significantly different across the prebirth periods. None of the F-test statistics indicates that the trends are significantly different'' (p.~109).
\item \citet{zhang2024gender}: ``Fig.\ 1 presents the gender-specific effects of childbirth on individual earnings, labor force participation, working hours and wage rate over a 12-year period around the first child's birth. As defined in Eq.\ (1), these coefficients indicate outcomes at year $t$ relative to the year before the first childbirth ($\tau = -1$). This figure includes 95 percent confidence intervals around the event coefficients. Panel A shows that the earnings of men and women evolve in almost parallel trends before the first childbirth and diverge after the first childbirth, with the difference continuing for several years'' (p.~283).
\end{itemize}

\section{Survey Data: UKHLS and SOEP}\label{sec:app_survey_data}

This appendix documents the two survey data sources used to assess the generalizability of the main empirical findings outside the Israeli administrative context: Understanding Society (UKHLS) for the United Kingdom and the Socio-Economic Panel (SOEP) for Germany.

\subsection{United Kingdom Data}\label{sec:app_ukhls}

The UK application uses survey data from Understanding Society (UKHLS), a longitudinal household study administered by the Institute for Social and Economic Research \citep{ukhls2024}. The dataset combines 15 waves of UKHLS (2009--2024) with 18 waves of the Harmonised British Household Panel Survey (BHPS, 1991--2008), linked through a common cross-wave person identifier.

\subsubsection{Outcome and Treatment Variables}
The outcome variable is annualised gross labour income, constructed by multiplying the derived monthly gross labour income variable by twelve and deflating to real 2020 GBP using a national CPI index. For BHPS waves in which this derived variable is unavailable, the usual gross pay variable is used instead. Age at first childbirth is computed using two sources. The preferred source is the respondent's fertility history, recorded in the \texttt{natchild} module (UKHLS waves 1 and 6) and \texttt{childnt} module (BHPS waves 2, 11, 12), which list all biological children regardless of co-residence. For respondents not covered by the fertility history module, the household child roster (\texttt{child.dta}) is used as a fallback. The fertility history is preferred over the child roster since the child roster records only co-resident children aged 0--15, it may miss non-coresident or older children, potentially recording a later-born child as the first and biasing $D$ upward.

\subsubsection{Sample Restrictions}
The sample is restricted to parents whose first child was born between ages 20 and 36, observed in survey waves with non-missing earnings. Unlike the administrative data, where individuals not appearing in tax records can be assigned zero earnings, survey non-response does not imply zero earnings---respondents may be employed but absent from a given wave due to attrition or item non-response. The sample therefore conditions on being observed with a valid earnings report. Following \citet{costadias2020}, two further restrictions are applied: individuals ever resident in Northern Ireland are excluded, as Northern Ireland entered the UKHLS sampling frame only in 2001, creating a sample-frame discontinuity; and individuals who ever report long-term sickness or disability (\texttt{jbstat}$=8$) are dropped. Person-years below age 18 are also excluded. Earlier treatment groups are included in the UK data as a large part of the treatment distribution is in earlier ages. In the UK, the mode of age at first childbirth for mothers is approximately 22--25, with substantial mass at ages 20--22. In Israel the distribution is more shifted to the right, with a mode of 28--29 (Figure~\ref{fig:d_dist_israel}). Accordingly, the estimation sample includes treatment groups $D\in\{20,\ldots,30\}$, starting at 20 rather than 24 (as in the main analysis) to capture a larger share of the UK treatment-group distribution. Larger values of age at first childbirth yield small cell sizes, and hence the analysis goes up to treatment group $D=30$ evaluated five years post treatment. The broader $D\in\{20,\ldots,36\}$ window is retained to supply the control groups required for estimation. The final panel comprises approximately 354 thousand person-year observations from around 40 thousand individuals.

\subsubsection{Human Capital Variables}
Several measures of human capital and family background are used to assess selection on treatment timing (Appendix Figure~\ref{fig:app_desc_cross}, top row). \textit{School leaving age} (\texttt{scend\_dv}) records the age at which the respondent left full-time education. Two measures of paternal background are taken from the cross-wave file: \textit{father's education} (\texttt{paedqf}), coded on a five-point scale where 5 denotes a university degree, from which a binary indicator for holding a degree is constructed; and \textit{father's occupation at age 14} (\texttt{pasoc90\_cc}), the father's SOC-90 occupational code, from which an indicator for managerial or professional occupations (codes 10--29) is constructed. School leaving age has high coverage, with non-missing values for 97\% of the estimation sample. The paternal background variables have lower coverage (70--79\%), as they are collected retrospectively and depend on respondent recall.

\subsection{Germany Data}\label{sec:app_soep}

The German application uses survey data from the Socio-Economic Panel (SOEP-CORE v41eu), a longitudinal household study administered by DIW Berlin \citep{soep2024}.

\subsubsection{Outcome and Treatment Variables}
The outcome variable is annualised gross labour income, constructed by multiplying current gross monthly labor income (\texttt{pglabgro}) by twelve and deflating to real 2020 EUR using a national CPI index. Non-employed person-years (employment status \texttt{pgemplst}$=5$) are assigned zero earnings, mirroring the UK and Israeli conventions; other negative values of \texttt{pglabgro} are treated as missing. Age at first childbirth is computed from the SOEP fertility biography file (\texttt{biobirth.dta}), which records the birth year of each biological child (\texttt{kidgeb01}--\texttt{kidgeb19}). The earliest non-missing child birth year is taken as the first child, and $D$ is defined as the first-birth year minus the respondent's year of birth.

\subsubsection{Sample Restrictions}
The sample is restricted to West Germany only, identified by the modal sampling-region indicator (\texttt{sampreg}$=1$) over an individual's panel records, following \citet{kleven2019child}. This restriction reflects structural pre-1990 differences in female labor force participation between East and West Germany; pooling the two regimes would mix two distinct labor market structures. The sample is further restricted to birth cohorts 1950--1984 for two reasons: younger cohorts are right-censored, while older cohorts are affected by the Second World War. Person-years with age below 18, missing earnings, or with the individual ever observed in a sheltered workshop (\texttt{pgemplst}$=7$, the SOEP analogue of long-term sick or disabled status) are excluded, mirroring the corresponding restriction applied to the UK data. As in the UK, earlier treatment groups are included given the German first-birth distribution: the estimation sample uses treatment groups $D\in\{20,\ldots,30\}$, with the broader window $D\in\{20,\ldots,36\}$ retained to provide control groups. The final panel comprises approximately 258 thousand person-year observations from around 36 thousand individuals.

\subsubsection{Human Capital Variables}
Several measures of human capital and family background are used to assess selection on treatment timing (Appendix Figure~\ref{fig:app_desc_cross}, middle and bottom rows). Cognitive ability is captured by three tests from the SOEP \texttt{cognit} module: symbol-digit substitution speed (\texttt{f99z90s}), word recall accuracy (\texttt{f25r}), and verbal fluency (\texttt{f96t90s}). Two measures of paternal background are taken from the biographical parent file (\texttt{bioparen}), which records respondents' retrospective reports on their parents during youth: an indicator for the father holding a university-track degree (\texttt{fprofedu}$\in\{5,6,11\}$, covering Fachhochschule, Universit\"at, and pre-1990 Hochschule codes); and the father's ISEI-88 occupational prestige score (\texttt{fisei88}). Maternal background variables are available but display less informative gradients in these older cohorts and are omitted.
\section{Data}\label{sec:appendix_data}

\subsection{Variable Definitions}\label{sec:appendix_data_vars}

\textit{Grandparents earnings rank.} For each parent in the data, I link the identifiers of her biological father and mother, denoted as the grandfather and grandmother.
For each parent, I sum the grandparents’ total annual earnings over the years in which the parent was aged 5–10 and divide this sum by six to obtain mean household earnings during that period.
I then rank mean household earnings within the parent’s birth cohort to construct the grandparents’ earnings-rank variable.
This procedure follows standard practice in the intergenerational income-mobility literature \citep[e.g.,][]{chetty2014land}.

\textit{Years of education and highest degree.} The CBS maintains an annually updated dataset recording individuals’ years of schooling. I use this to construct a variable for the maximum observed years of education for each individual. I also construct a binary indicator for whether the individual holds a bachelor’s degree or higher.

\textit{Meitzav test score.} The Meitzav is a national standardized exam administered by the Israeli Ministry of Education in four subjects: science, mathematics, English, and Hebrew, first implemented in 2002.
For each individual with available data, I retain the mathematics score and standardize it to have mean zero and standard deviation one within each exam year.

\textit{High-school credits.} In Israel, high-school students complete subjects at varying levels, defined by the number of credit units in each subject up to a maximum of five.
For each individual, I construct binary indicators for completing five credit units in selected subjects: mathematics, English, computers, and physics, as reported in the individual’s matriculation certificate.

\textit{Ethnicity and religion.} The Civil Registry classifies individuals as Jewish or Arab. Among Jewish individuals, religious affiliation is inferred (with some measurement error) from school type, which falls into one of three streams: Ultra-Orthodox (Haredi), state-religious, and state-secular.

\subsection{Analysis Dataset Definition}\label{sec:appendix_data_def}

The main sample restrictions are described in Section~\ref{sec:data}.
Two additional exclusions, omitted from the main text due to their negligible impact on sample size, are applied throughout the analysis.
First, I drop cases in which the recorded birth year of the parent is later than that of their first child (951 observations).
Second, I drop individuals recorded as giving birth at age ten or younger (443 observations).

The definitions below describe the construction of the auxiliary datasets used for Figure~\ref{fig:desc_all} in Section~\ref{sec:did_violation}.
Because these figures do not rely on observed earnings, the lower threshold of treatment-group range is decreased to 20, as compared to 24 in the main analysis.

\textit{Grandparents earnings rank.}
Grandparents’ earnings come from administrative income data for 1990–2020.
Because earnings data begin in 1990, the youngest parent birth cohort for which the grandparents’ earnings rank can be constructed is 1985.
Since the parent sample extends up to the 1990 birth cohort, the dataset used for this variable includes cohorts 1985–1990.
Given that income data end in 2020, the latest observable first-birth age (treatment group) for which grandparents’ earnings rank is available is 35.

\textit{Grandparents Education.} For each grandparent, I take the reported years of education and classify it into three categories: missing, at most 12 years (high school or less), or above 12 years (post-secondary education).
This variable is available for all treatment groups with first-birth ages between 20 and 40.
The share of missing observations ranges between 1–2\% for grandmothers and 3–5\% for grandfathers across treatment groups.

\textit{Meitzav.} Data on Meitzav test scores are available for the years 2002–2019.
The exams are administered in 5th and 8th grades, corresponding to ages 10–11 and 13–14, respectively. The 5th-grade test is observable for birth cohorts 1991 and onward. To analyze 5th-grade scores, I therefore include two additional cohorts not used in the main analysis sample: 1991 and 1992. Given that birth data is available up to 2020, this measure is available for treatment groups with first-birth ages 20–29.
The 8th-grade test is observable for birth cohorts 1988 and onward. To analyze 8th-grade scores, I limit the dataset to birth cohorts 1988–1990, and hence the measure is available for treatment groups aged 20–32.

\textit{High-school credits.} Data on the number of credit units in high-school subjects are available for most of the school cohort beginning with the 1980 birth cohort.
Accordingly, this measure is constructed for all treatment groups with first-birth ages between 20 and 40.
Note that sample size declines sharply at higher first-birth ages: the number of observations decreases from 25{,}710 for treatment age 30, to 4{,}451 for treatment age 35, and to 248 for treatment age 40.

\section{Selection on Fertility Timing: UK and Germany}\label{sec:app_hc_evidence}

Figure~\ref{fig:app_desc_cross} documents pre-childbirth selection on observables in the UK and German survey data described in Appendix~\ref{sec:app_survey_data}, paralleling the Israeli evidence in Section~\ref{sec:did_violation}. In the UK, parents who delay first childbirth leave full-time education at later ages and come from families with higher paternal education and a higher probability of the father being in a managerial or professional occupation. In Germany, parents who delay childbirth have fathers with higher educational attainment and occupational prestige, and they themselves score higher on three adult cognitive tests (symbol-digit speed, word recall, verbal fluency). Gradients are positive and similar across genders in both countries. The cognitive scores are measured post-childbirth for parents, and may therefore be affected by parenthood itself, but still provide suggestive evidence that inherent cognitive ability differs by age at first childbirth.

These patterns are in line with the selection on fertility mechanism discussed above, and hence suggest such patterns are quite general.

\begin{figure}[p]
    \centering
    \includegraphics[width=\textwidth]{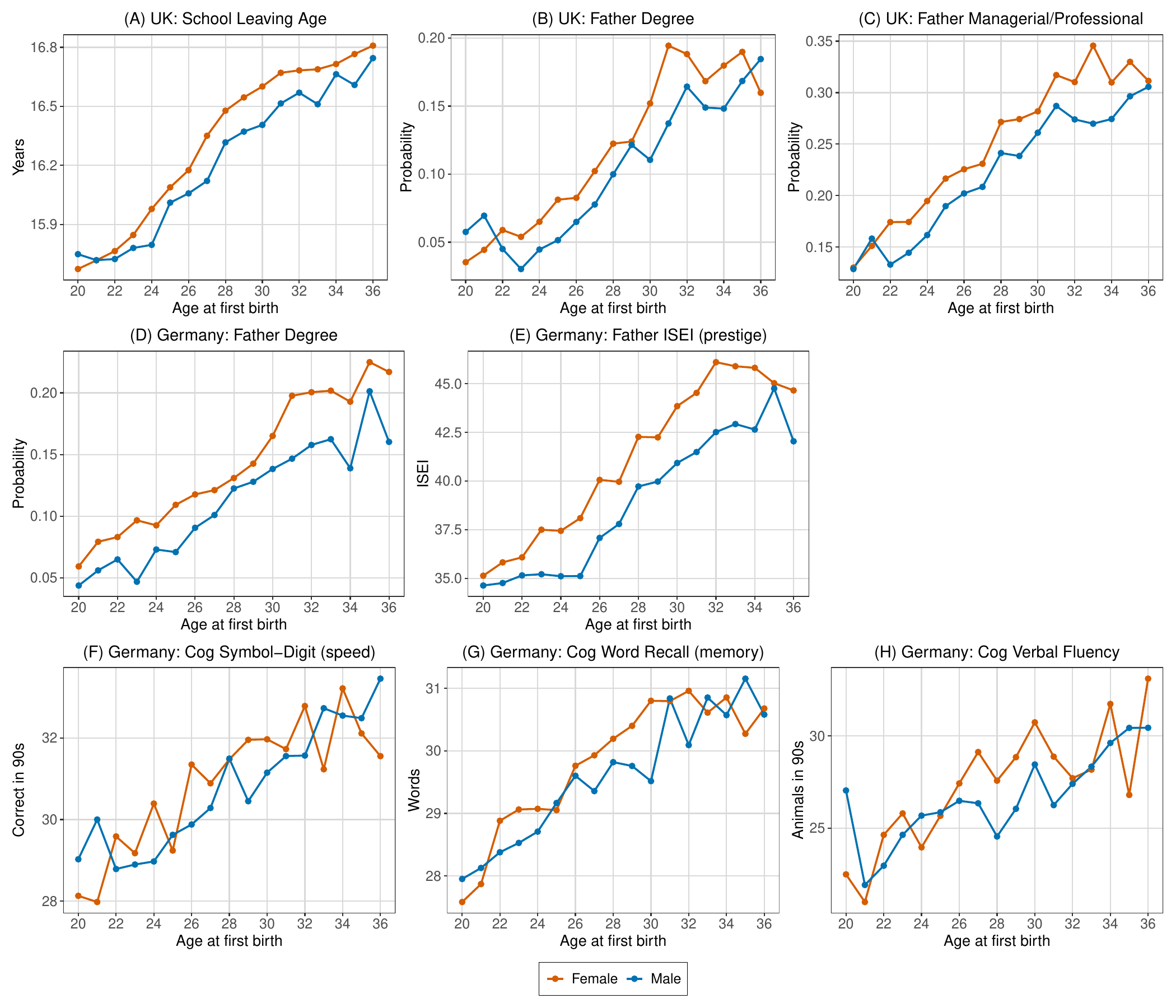}
    \caption{Pre-Childbirth Human Capital and Age at First Childbirth: UK and Germany. \textit{Notes:} The figure shows means by age at first birth (x-axis) and gender (colors). Top row, UK (Understanding Society + Harmonised BHPS): school leaving age (Panel A), probability that the father holds a university degree (Panel B), probability that the father is in a managerial or professional occupation (Panel C). Middle row, Germany (SOEP-CORE v41eu) paternal background from the biographical parent file (\texttt{bioparen}, retrospective reports about the father during the respondent's youth): probability that the father holds a university-track degree (Panel D), father's ISEI-88 occupational prestige (Panel E). Bottom row, Germany cognitive tests from the SOEP \texttt{cognit} module: symbol-digit substitution speed (Panel F), word recall (Panel G), verbal fluency (Panel H). All measures are raw means. Sample definitions and variable details are in Appendix~\ref{sec:app_survey_data}.}
    \label{fig:app_desc_cross}
\end{figure}
\section{DID and TD in Logarithmic Functional Form}\label{sec:td_violation}

TD in levels, as opposed to NTD, is a viable alternative because it avoids the normalization bias identified in Theorem~\ref{thm:cp_iden_res}. This appendix asks whether the parallel trends and TD assumptions are more plausible in logs than in levels. Building on Section~\ref{sec:did_violation}, I argue that the parallel trends assumption within gender is, if anything, less plausible in logs, while the TD assumption is more plausible in logs than in levels. The tradeoff is that logs cannot accommodate zero earnings---the log of zero is undefined---so a log specification requires dropping zero-earnings observations, which introduces selection. NTD is similar in spirit to TD in logs: it restricts parallel trends violations to be equal across genders in percentage terms, but admits zero earnings and so avoids this tradeoff.

\subsection{DID in Logs}

Recall $\gamma_{\mathrm{PT}}(g,d,d^\prime,a)$ is the difference in counterfactual trends between treatment group $d$ and control group $d^\prime$ for gender $g$, from age $d-1$ to age $a$. To differentiate between levels and logs, I denote this quantity in this appendix by $\gamma_{\text{PT-LEVELS}}(g,d,d^\prime,a)$. Let $y_{a}(d) = \log(Y_{a}(d))$ denote log potential earnings, and define the analogous difference in counterfactual log earnings trends as
\[
\begin{aligned}
\gamma_{\text{PT-LOGS}}(g,d,d',a) &= \mathbb{E}[y_a(\infty) - y_{d-1}(\infty) \mid G=g, D=d] \\
&\quad - \mathbb{E}[y_a(\infty) - y_{d-1}(\infty) \mid G=g, D=d'] \\
&= \mathbb{E}\left[\log\left(\frac{Y_a(\infty)}{Y_{d-1}(\infty)}\right) \mid G=g, D=d\right] \\
&\quad - \mathbb{E}\left[\log\left(\frac{Y_a(\infty)}{Y_{d-1}(\infty)}\right) \mid G=g, D=d'\right].
\end{aligned}
\]
That is, the parallel trends assumption in logs, $\gamma_{\text{PT-LOGS}}=0$, holds exactly when mean log earnings growth from age $d-1$ to age $a$ is the same in the treatment and control groups.

Section~\ref{sec:did_violation} argues that the parallel trends assumption in levels likely fails due to selection on treatment timing. Does the same selection mechanism also invalidate parallel trends in logs?
Recall the argument: later-treated individuals have higher labor-market ability, invest more in human capital and so enter the labor market later, and then experience steeper absolute earnings growth. This suggests that, at early ages, the later-treated control group has lower absolute earnings than the earlier-treated group, which is more established in the labor market at those ages. Faster absolute growth from a lower base implies faster proportional growth. Hence, if both patterns hold, the parallel trends assumption is violated in logs as well.

Consider a numerical example. All quantities are counterfactual earnings absent childbirth, and individuals within a treatment group are identical, so group means coincide with individual earnings. Compare two treatment groups at baseline age 24 (one year before treatment for $d=25$) and target age 29; at age 29, the closest group not yet treated is $d=30$:
\begin{itemize}
\item Treatment group ($d=25$): $Y_{24} = 100$, growing to $Y_{29} = 120$.
\item Control group ($d=30$): $Y_{24} = 50$, growing to $Y_{29} = 130$.
\end{itemize}
In levels, counterfactual earnings grow by $20$ in the treatment group and by $80$ in the control group, so $\gamma_{\text{PT-LEVELS}} = 20 - 80 = -60$: a violation of the parallel trends assumption in levels, negative as the selection mechanism predicts. In logs, the growth rates are
\begin{itemize}
\item Treatment group ($d=25$): $\log(Y_{29}/Y_{24}) = \log(1.20) \approx 0.18$.
\item Control group ($d=30$): $\log(Y_{29}/Y_{24}) = \log(2.60) \approx 0.96$.
\end{itemize}
so $\gamma_{\text{PT-LOGS}} \approx 0.18 - 0.96 = -0.78$. The control group grows faster in proportional terms as well, because it starts from a lower base while experiencing larger absolute growth. Both forces---steeper absolute trajectories and lower initial earnings---push in the same direction, so moving to logs does not repair the violation.

\subsection{TD in Logs}

TD in levels assumes $\gamma_{\text{PT-LEVELS}}(f,d,d^\prime,a)=\gamma_{\text{PT-LEVELS}}(m,d,d^\prime,a)$; TD in logs assumes $\gamma_{\text{PT-LOGS}}(f,d,d^\prime,a)=\gamma_{\text{PT-LOGS}}(m,d,d^\prime,a)$. Next, recall that $\rho(d,\infty,a)$ is the gender ratio of mean counterfactual earnings for treatment group $d$ at age $a$; in this appendix I denote it $\rho_{\text{LEVELS}}(d,\infty,a) = \mathbb{E}[Y_a(\infty) \mid G=f, D=d]\,/\,\mathbb{E}[Y_a(\infty) \mid G=m, D=d]$. Similarly, let
\[
\rho_{\text{LOGS}}(d,\infty,a)=\mathbb{E}[y_a(\infty) \mid G=f, D=d] - \mathbb{E}[y_a(\infty) \mid G=m, D=d]
\]
denote the log gender earnings gap. Note that $\rho_{\text{LEVELS}}$ is a ratio of means while $\rho_{\text{LOGS}}$ is a difference of log means; the two are linked below. The gender difference in log parallel trends violations can be rewritten as
\[
\begin{aligned}
   & \gamma_{\text{PT-LOGS}}(f,d,d^\prime,a)-\gamma_{\text{PT-LOGS}}(m,d,d^\prime,a) \\
   & = \rho_{\text{LOGS}}(d,\infty,a) - \rho_{\text{LOGS}}(d,\infty,d-1)-\big[\rho_{\text{LOGS}}(d^\prime,\infty,a) - \rho_{\text{LOGS}}(d^\prime,\infty,d-1)\big].
\end{aligned}
\]
This identity holds unconditionally. TD in logs therefore holds if and only if the right-hand side is zero: the log gender earnings gap must trend identically from age $d-1$ to age $a$ in treatment groups $d$ and $d^\prime$.

The empirical comparison below measures the log specification using $\log(\rho_{\text{LEVELS}})$. The following derivation establishes when this is a valid stand-in for $\rho_{\text{LOGS}}$. Suppose counterfactual earnings are log-normally distributed,
\[
Y_{a}\left(\infty\right)\mid G=g,D=d\sim\mathrm{LogNormal}\left(\nu_{g,d,a},\sigma_{g,d,a}^{2}\right),
\]
where $\nu_{g,d,a}$ and $\sigma_{g,d,a}^{2}$ denote the mean and variance of log counterfactual earnings. Since $\mathbb{E}[Y_a(\infty)\mid G=g,D=d]=\exp(\nu_{g,d,a}+\sigma_{g,d,a}^{2}/2)$,
\[\rho_{\text{LEVELS}}\left(d,\infty,a\right)=\exp\left(\nu_{f,d,a}-\nu_{m,d,a}+\frac{\sigma_{f,d,a}^{2}-\sigma_{m,d,a}^{2}}{2}\right).\]
If log-earnings variances are equal across genders, $\sigma_{f,d,a}^{2}=\sigma_{m,d,a}^{2}$, then $\log(\rho_{\text{LEVELS}})=\nu_{f,d,a}-\nu_{m,d,a}=\rho_{\text{LOGS}}$. Equality of log-earnings variances across genders is a strong assumption and need not hold in the data. A weaker condition suffices, however, because the comparison below uses $\rho_{\text{LOGS}}$ only through its trends: it is enough that the gender gap in log-earnings variances, $\sigma_{f,d,a}^{2}-\sigma_{m,d,a}^{2}$, evolves in parallel across treatment groups, since level differences in the variance gap then cancel when comparing trends of $\log(\rho_{\text{LEVELS}})$ across groups.

Figure~\ref{fig:rho_lifecycle} plots the observed female-to-male earnings ratio by treatment group; at pre-treatment ages, no anticipation (Assumption~\ref{A.no.ant.1}) makes this an estimate of $\rho_{\text{LEVELS}}(d,\infty,a)$. The groups are shifted relative to one another but trend similarly. Under log-normality and the parallel variance-gap condition above, this provides suggestive evidence in favor of TD in the log specification.

A more direct evaluation compares estimated TD violations in levels and in logs at pre-treatment ages, similar to the discussion on pre-trends validation tests in Section~\ref{sec:did_violation}. Since the sample includes zero earnings, mean log earnings are not well defined; I therefore measure the log specification by trends in $\log(\rho_{\text{LEVELS}})$, which coincide with trends in $\rho_{\text{LOGS}}$ under the conditions above. To place the two specifications in comparable units, each levels estimate is divided by mean male earnings of the treatment group at the baseline age, $\mathbb{E}[Y_{d-1}\mid G=m,D=d]$, so both series are in proportional units.

Figure~\ref{fig:td_levels_logs} presents the results. Each transparent point is the estimated violation for one treatment--control pair and pre-treatment target age, i.e., one $(d,d^\prime,a)$ combination; solid points connected by lines are means across these estimates within each gap in treatment timing, $d^\prime-d$, separately by functional form.\footnote{Treatment groups $d=24,25$ are excluded: their pre-treatment windows fall in ages 20--23, and including them produces large violations in both functional forms. The comparison is therefore informative for pairs whose pre-treatment ages fall after labor-market entry. Figure~\ref{fig:rho_lifecycle} displays $d\in[27,38]$ for readability; the underlying data are the same.} In levels, the mean estimated violation grows with the gap in treatment timing: from approximately zero for adjacent control groups to about 3 percent of baseline male earnings for control groups five to six years apart---a gradient consistent with the selection mechanism of Section~\ref{sec:did_violation}. In logs, the mean estimated violation is within one percentage point of zero at every timing gap, and the pair-level estimates scatter on both sides of zero. This shows that, on average across treatment-control pairs, the two specifications are similar for adjacent control groups and diverge as the timing gap widens. This suggests that the TD assumption is more credible in logs than in levels.

\begin{figure}[t!]
    \centering
    \includegraphics[width=.9\textwidth]{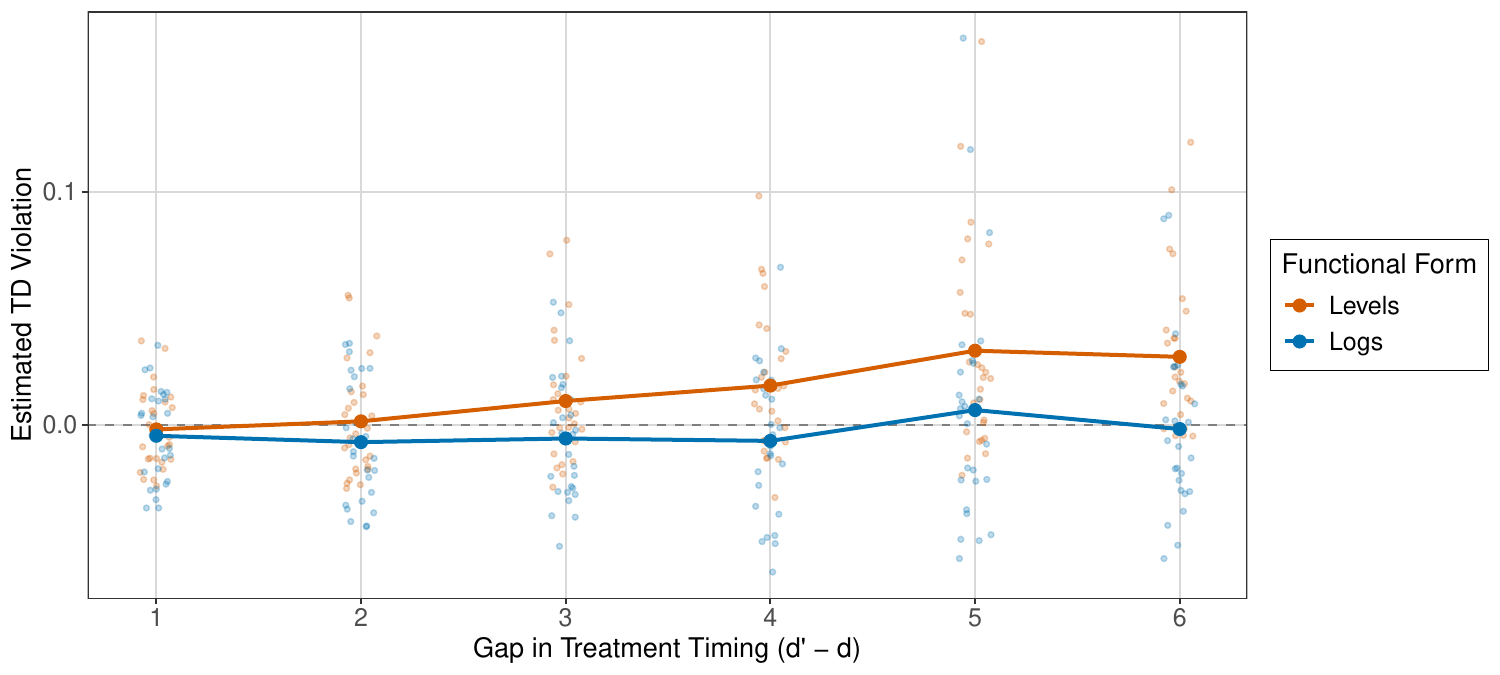}
    \caption{Pre-Treatment TD Validation Tests in Levels and Logs. \textit{Notes:} Treatment groups $d\in\{26,\ldots,34\}$, control groups $d^{\prime}=d+1,\ldots,d+6$, target ages $a\in\{d-4,d-3,d-2\}$, baseline age $d-1$. Each transparent point is the estimated TD violation for one treatment--control pair and pre-treatment target age. In levels, the estimate is the triple difference of mean annual earnings: the first difference is between two ages, the second difference is between a treatment group and a not-yet-treated control group, and the third difference is between genders. The levels triple-difference estimates are divided by mean male earnings of the treatment group at the baseline age. The logs specification first transforms each mean into logs and then calculates the triple difference. Solid points connected by lines are means within each gap in treatment timing, $d^{\prime}-d$ (x-axis), separately by functional form (colors). The dashed horizontal line indicates no violation.}
    \label{fig:td_levels_logs}
\end{figure}

Even so, a log specification carries a cost: a nonnegligible share of observations have zero earnings. Dropping them introduces selection, and replacing them with $\log(1+Y)$, or similar transformations, introduces the biases documented in \citet{chen2024logs}. NTD delivers the percentage-form restriction while retaining zero-earnings observations, and is the route taken in this paper. %
\section{Identification}\label{sec:app_ident_proofs}

\begin{proof}[Proof of Lemma \ref{lemma:desc_tie_cause}]
Re-arranging the definition of $\gamma_\mathrm{PT}$ implies
\begin{align*}
\mathbb{E}\left[Y_{a}\left(\infty\right)\mid G=g,D=d\right] & =\gamma_{\mathrm{PT}}\left(g,d,d^{\prime},a\right)+\mathbb{E}\left[Y_{d-1}\left(\infty\right)\mid G=g,D=d\right]\\
 & +\mathbb{E}\left[Y_{i,a}\left(\infty\right)-Y_{d-1}\left(\infty\right)\mid G=g,D=d^{\prime}\right].
\end{align*}
Assumption \ref{A.no.ant.1} and consistency imply
\begin{align*}
\mathbb{E}\left[Y_{a}\left(\infty\right)\mid G=g,D=d\right] & =\gamma_{\mathrm{PT}}\left(g,d,d^{\prime},a\right)+\mathbb{E}\left[Y_{d-1}\mid G=g,D=d\right]\\
 & +\mathbb{E}\left[Y_{a}-Y_{d-1}\mid G=g,D=d^{\prime}\right].
\end{align*}
Substituting the definition of $\delta_{\mathrm{APO}}$ and re-arranging we obtain \eqref{eq:app_proof_cp_ass_1}. Then substituting \eqref{eq:app_proof_cp_ass_1} into the definition of the $\delta_{\mathrm{ATE}}$ we obtain \eqref{eq:app_proof_cp_ass_2}. Finally, subtitute both \eqref{eq:app_proof_cp_ass_1} and \eqref{eq:app_proof_cp_ass_2} into the definition of $\delta_{\theta}$ and re-arrange to obtain \eqref{eq:lemma_d_theta_bias}.
\end{proof}

\begin{proof}[Proof of Proposition \ref{prop:cp_ident_ass}]
Considering $\delta_{\theta}\left(f,d,d^{\prime},a\right)=\delta_{\theta}\left(m,d,d^{\prime},a\right)$
for $a<d,d^{\prime}$ we obtain
\begin{align*}
 & \frac{\delta_{\mathrm{ATE}}\left(f,d,d^{\prime},a\right)}{\delta_{\mathrm{APO}}\left(f,d,d^{\prime},a\right)}=\frac{\delta_{\mathrm{ATE}}\left(m,d,d^{\prime},a\right)}{\delta_{\mathrm{APO}}\left(m,d,d^{\prime},a\right)}\\
\overset{\text{ }}{\leftrightarrow} & \frac{\ATE\left(f,d,a\right)+\gamma_{\mathrm{PT}}\left(f,d,d^{\prime},a\right)}{\APO\left(f,d,\infty,a\right)-\gamma_{\mathrm{PT}}\left(f,d,d^{\prime},a\right)}=\frac{\ATE\left(m,d,a\right)+\gamma_{\mathrm{PT}}\left(m,d,d^{\prime},a\right)}{\APO\left(m,d,\infty,a\right)-\gamma_{\mathrm{PT}}\left(m,d,d^{\prime},a\right)}\\
\leftrightarrow & \frac{\gamma_{\mathrm{PT}}\left(f,d,d^{\prime},a\right)}{\APO\left(f,d,\infty,a\right)-\gamma_{\mathrm{PT}}\left(f,d,d^{\prime},a\right)}=\frac{\gamma_{\mathrm{PT}}\left(m,d,d^{\prime},a\right)}{\APO\left(m,d,\infty,a\right)-\gamma_{\mathrm{PT}}\left(m,d,d^{\prime},a\right)}\\
\leftrightarrow & \frac{\gamma_{\mathrm{PT}}\left(f,d,d^{\prime},a\right)}{\APO\left(f,d,\infty,a\right)}=\frac{\gamma_{\mathrm{PT}}\left(m,d,d^{\prime},a\right)}{\APO\left(m,d,\infty,a\right)},
\end{align*}
where the first iff is due to substituting \eqref{eq:app_proof_cp_ass_1} and \eqref{eq:app_proof_cp_ass_2}, the second iff follows from Assumption \ref{A.no.ant.1}, and the third follows by algebra.
\end{proof}

\begin{proof}[Proof of Theorem \ref{thm:cp_iden_res}]
Assumption \ref{A.ntd_pt} implies 
\begin{align}\label{eq:theorem_d_theta_bias_1}
\frac{\APO\left(f,d,\infty,a\right)}{\APO\left(f,d,\infty,a\right)-\gamma_{\mathrm{PT}}\left(f,d,d^{\prime},a\right)} & =\frac{\APO\left(f,d,\infty,a\right)}{\APO\left(f,d,\infty,a\right)-\frac{\APO\left(f,d,\infty,a\right)}{\APO\left(m,d,\infty,a\right)}\gamma_{\mathrm{PT}}\left(m,d,d^{\prime},a\right)}\nonumber\\
 & =\frac{1}{1-\frac{\gamma_{\mathrm{PT}}\left(m,d,d^{\prime},a\right)}{\APO\left(m,d,\infty,a\right)}}\nonumber\\
 & =\frac{\APO\left(m,d,\infty,a\right)}{\APO\left(m,d,\infty,a\right)-\gamma_{\mathrm{PT}}\left(m,d,d^{\prime},a\right)}.
\end{align}
A similar argument shows
\begin{align}\label{eq:theorem_d_theta_bias_2}
\frac{\gamma_{\mathrm{PT}}\left(f,d,d^{\prime},a\right)}{\APO\left(f,d,\infty,a\right)-\gamma_{\mathrm{PT}}\left(f,d,d^{\prime},a\right)} =\frac{\gamma_{\mathrm{PT}}\left(m,d,d^{\prime},a\right)}{\APO\left(m,d,\infty,a\right)-\gamma_{\mathrm{PT}}\left(m,d,d^{\prime},a\right)}.
\end{align}
This shows that $Bias(d,d^\prime,a)$, as stated in the theorem, is indeed gender invariant. Subtracting \eqref{eq:lemma_d_theta_bias} using $g=f$ from \eqref{eq:lemma_d_theta_bias} using $g=m$, and applying \eqref{eq:theorem_d_theta_bias_1} and \eqref{eq:theorem_d_theta_bias_2} we obtain the result in the theorem, which concludes the proof.
\end{proof}

The next lemma characterizes the bias for another target causal estimand that is sometimes used in the literature to quantify child penalties. Let 
\[
\begin{aligned}
P(d,a)&=\frac{\ATE(f,d,a)-\ATE(m,d,a)}{\APO(f,d,\infty,a)},\\
\delta_P(d,d^\prime,a)&=\frac{\delta_{\mathrm{ATE}}(f,d,d',a)-\delta_{\mathrm{ATE}}(m,d,d',a)}{\delta_{\mathrm{APO}}(f,d,d',a)}.
\end{aligned}
\]

\begin{lemma}\label{lem:ntd_cross_bias}
Assume the same setup as in Theorem~\ref{thm:cp_iden_res}. 
\begin{align*}
\delta_P(d,d^\prime,a) & =P(d,a)\times\frac{\APO(f,d,\infty,a)}{\APO(f,d,\infty,a)-\gamma_{\mathrm{PT}}(f,d,d^{\prime},a)}\\
 & +\frac{\gamma_{\mathrm{PT}}(f,d,d^{\prime},a)-\gamma_{\mathrm{PT}}(m,d,d^{\prime},a)}{\APO(f,d,\infty,a)-\gamma_{\mathrm{PT}}(f,d,d^{\prime},a)}.
\end{align*}
\end{lemma}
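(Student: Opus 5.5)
The plan is to reduce the statement to the two level identities in Lemma~\ref{lemma:desc_tie_cause} and then do algebra. The setup of Theorem~\ref{thm:cp_iden_res} gives what that lemma needs: Assumption~\ref{A.no.ant.1}, a not-yet-treated control $d^\prime>a$, and $\APO(g,\infty)-\gamma_{\mathrm{PT}}(g)\neq0$. So I will invoke \eqref{eq:app_proof_cp_ass_1} and \eqref{eq:app_proof_cp_ass_2} for both $g\in\{f,m\}$, that is,
\[
\delta_{\mathrm{APO}}(f)=\APO(f,\infty)-\gamma_{\mathrm{PT}}(f),\qquad \delta_{\mathrm{ATE}}(g)=\ATE(g)+\gamma_{\mathrm{PT}}(g).
\]
I will not need Assumption~\ref{A.ntd_pt}, since the statement retains the gender difference in violations explicitly.

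First, I substitute these into the definition of $\delta_P$. The numerator becomes $\ATE(f)-\ATE(m)+\gamma_{\mathrm{PT}}(f)-\gamma_{\mathrm{PT}}(m)$, and the denominator becomes $\APO(f,\infty)-\gamma_{\mathrm{PT}}(f)$, which is nonzero by assumption. Splitting the fraction into two terms gives
\[
\delta_P=\frac{\ATE(f)-\ATE(m)}{\APO(f,\infty)-\gamma_{\mathrm{PT}}(f)}+\frac{\gamma_{\mathrm{PT}}(f)-\gamma_{\mathrm{PT}}(m)}{\APO(f,\infty)-\gamma_{\mathrm{PT}}(f)}.
\]
The second term already matches the statement.

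Second, in the first term I multiply and divide by $\APO(f,\infty)$. This is allowed because Assumption~\ref{A.ntd_pt} requires $\APO(f,\infty)\neq0$. The first term then becomes $P(d,a)$ times $\APO(f,\infty)/(\APO(f,\infty)-\gamma_{\mathrm{PT}}(f))$, which completes the identity.

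There is no real obstacle here. The one point to check is that every denominator is nonzero, and both are covered by the stated assumptions. As an optional remark, I could also use Assumption~\ref{A.ntd_pt} to rewrite $\gamma_{\mathrm{PT}}(m)=\gamma_{\mathrm{PT}}(f)\APO(m,\infty)/\APO(f,\infty)$. The additive term is then generically nonzero unless $\gamma_{\mathrm{PT}}=0$ or $\APO(f,\infty)=\APO(m,\infty)$, which explains the footnote's claim of non-identification.
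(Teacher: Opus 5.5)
Your proof is correct and follows the paper's route: substitute the level identities \eqref{eq:app_proof_cp_ass_1}--\eqref{eq:app_proof_cp_ass_2} from Lemma~\ref{lemma:desc_tie_cause} into $\delta_P$ and rearrange. The only difference is cosmetic: the paper adds and subtracts $P(d,a)$ before regrouping, whereas you split the fraction and multiply and divide by $\APO(f,\infty)$ directly. Your version is slightly cleaner and makes explicit that Assumption~\ref{A.ntd_pt} enters only through $\APO(f,\infty)\neq0$.
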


\begin{proof}
From Lemma \ref{lemma:desc_tie_cause}
\begin{align}\label{eq:lemma2_1}
\delta_P(d,d^\prime,a) & =\frac{\ATE(f,d,a)+\gamma_{\mathrm{PT}}(f,d,d^{\prime},a)}{\APO(f,d,\infty,a)-\gamma_{\mathrm{PT}}(f,d,d^{\prime},a)} -\frac{\ATE(m,d,a)+\gamma_{\mathrm{PT}}(m,d,d^{\prime},a)}{\APO(f,d,\infty,a)-\gamma_{\mathrm{PT}}(f,d,d^{\prime},a)}.
\end{align}
Adding and subtracting $P(d,a)$ in \eqref{eq:lemma2_1},
after some algebra, yields
\begin{align*}
\delta_P(d,d^\prime,a) & =P(d,a)+\frac{\gamma_{\mathrm{PT}}(f,d,d^{\prime},a)}{\APO(f,d,\infty,a)}\frac{\ATE(f,d,a)-\ATE(m,d,a)}{\APO(f,d,\infty,a)-\gamma_{\mathrm{PT}}(f,d,d^{\prime},a)}\\
 & +\frac{\gamma_{\mathrm{PT}}(f,d,d^{\prime},a)-\gamma_{\mathrm{PT}}(m,d,d^{\prime},a)}{\APO(f,d,\infty,a)-\gamma_{\mathrm{PT}}(f,d,d^{\prime},a)}.
\end{align*}
Rearranging terms we obtain the statement in the lemma. 
\end{proof}

\begin{proof}[Proof of Theorem \ref{thm:ntd_new_estimand}]
Substituting definitions, Assumption \ref{A.ntd_pt} can be written as 
\begin{align*}
 & \frac{\APO\left(f,d,\infty,a\right)-\APO\left(f,d,\infty,d-1\right)-\left[\APO\left(f,d^{\prime},\infty,a\right)-\APO\left(f,d^{\prime},\infty,d-1\right)\right]}{\APO\left(f,d,\infty,a\right)}\\
 & =\frac{\APO\left(m,d,\infty,a\right)-\APO\left(m,d,\infty,d-1\right)-\left[\APO\left(m,d^{\prime},\infty,a\right)-\APO\left(m,d^{\prime},\infty,d-1\right)\right]}{\APO\left(m,d,\infty,a\right)},
\end{align*}
which can be simplified into 
\begin{align}\label{eq:app_new_estimand_1}
 & \frac{\APO\left(f,d,\infty,d-1\right)+\left[\APO\left(f,d^{\prime},\infty,a\right)-\APO\left(f,d^{\prime},\infty,d-1\right)\right]}{\APO\left(f,d,\infty,a\right)} \nonumber\\
 & =\frac{\APO\left(m,d,\infty,d-1\right)+\left[\APO\left(m,d^{\prime},\infty,a\right)-\APO\left(m,d^{\prime},\infty,d-1\right)\right]}{\APO\left(m,d,\infty,a\right)}.
\end{align}
Substituting definitions into \eqref{eq:app_new_estimand_1} yields $\frac{\delta_{\mathrm{APO}}(f,d,d^{\prime},a)}{\delta_{\mathrm{APO}}(m,d,d^{\prime},a)}=\frac{\APO\left(f,d,\infty,a\right)}{\APO\left(m,d,\infty,a\right)}$ and consistency implies
$\frac{\APO\left(f,d,d,a\right)}{\APO\left(m,d,d,a\right)}=\frac{\mathbb{E}\left[Y_{a}\mid G=f,D=d\right]}{\mathbb{E}\left[Y_{a}\mid G=m,D=d\right]}$.
\end{proof}

\begin{proof}[Proof of Proposition \ref{prop:bias_correction}]
If $\APO(m,d,\infty,a)$ is known, \eqref{eq:app_proof_cp_ass_1} identifies
\[\gamma_{\mathrm{PT}}(m,d,d^{\prime},a) = \APO(m,d,\infty,a) - \delta_{\APO}(m,d,d^\prime,a).\]
Substituting into the bias parameter in Theorem~\ref{thm:cp_iden_res} yields
\[\mathrm{Bias}\left(d,d^{\prime},a\right)=\frac{\APO(m,d,\infty,a)}{\delta_{\APO}(m,d,d^{\prime},a)}.\]
Multiplying the result in Theorem~\ref{thm:cp_iden_res} by the inverse of the bias concludes the proof.
\end{proof}
\section{Estimation and Inference}
\label{sec:appendix_no_covariates}

For a triplet $(d,d',a)$, the main text considers the following descriptive estimands. Let $\mu_{g,d,a}=\mathbb{E}[Y_{i,a}\mid G_{i}=g,\,D_{i}=d]$ be the conditional mean of earnings. The descriptive DID estimands defined in \eqref{eq:desc_est} can be written as $\delta_{\APO}(g,d,d',a)=\mu_{g,d,d-1}+(\mu_{g,d',a}-\mu_{g,d',d-1})$, $\delta_{\ATE}(g,d,d',a)=\mu_{g,d,a}-\delta_{\APO}(g,d,d',a)$, and $\delta_{\theta}(g,d,d',a)=\delta_{\ATE}(g,d,d',a)/\delta_{\APO}(g,d,d',a)$. The three gender gaps are $\delta_{\Delta\ATE}(d,d',a)=\delta_{\ATE}(f,d,d',a)-\delta_{\ATE}(m,d,d',a)$, the gap in unnormalized DID; $\delta_{\Delta\theta}(d,d',a)=\delta_{\theta}(f,d,d',a)-\delta_{\theta}(m,d,d',a)$, the (biased) gap in normalized DID from Theorem~\ref{thm:cp_iden_res}; and $\delta_{\Delta\rho}(d,d',a)=\mu_{f,d,a}/\mu_{m,d,a}-\delta_{\APO}(f,d,d',a)/\delta_{\APO}(m,d,d',a)$, the new estimand from Theorem~\ref{thm:ntd_new_estimand}. Finally, applying Theorem~\ref{thm:ntd_new_estimand} together with $\APO(m,d,\infty,a)=\mu_{m,d,a}/(1+\theta(m,d,a))$ and equation~\eqref{eq:ntd_base_to_alt}, the bias-corrected gender gap from Proposition~\ref{prop:bias_correction} is
\[\delta_{\Delta\theta,\mathrm{BC}}(d,d',a \mid\theta(m,d,a))=\delta_{\Delta\rho}(d,d',a)\,\frac{\delta_{\APO}(m,d,d',a)}{\delta_{\APO}(f,d,d',a)}\,\bigl(1+\theta(m,d,a)\bigr).\]
Estimators replace each population mean with the sample mean $\bar{Y}_{g,d,a}=n_{g,d}^{-1}\sum_i S_i(g,d)\,Y_{i,a}$, where $S_i(g,d)=\mathbf{1}\{G_i=g,\,D_i=d\}$ and $n_{g,d}=\sum_i S_i(g,d)$.

Under regularity and for generic estimand $\delta$ and estimator $\widehat{\delta}$, $\sqrt{n}\,(\widehat\delta-\delta)=n^{-1/2}\sum_i\varphi_i+o_p(1)$ where $\varphi_i$ are individual components of a generic influence function (IF) $\psi_\delta$. Standard errors use cluster-robust variance, with clustering at the individual level following standard practice for difference-in-differences and panel data analyses \citep{bertrand2004much}. Replacing expectations with sample analogs: $\widehat{\mathrm{Var}}(\hat\delta)=n^{-2}\sum_c\bigl(\sum_{i\in\mathcal{I}_c}\hat\varphi_i\bigr)^2$, where $c$ indexes individuals and $\mathcal{I}_c$ is the set of observations belonging to individual $c$. The IF of the conditional mean follows from the moment-ratio identity $\mu_{g,d,a}=\mathbb{E}[S(g,d)Y_a]/\mathbb{E}[S(g,d)]$ and the quotient rule:
\[\psi_{\mu_{g,d,a}}=\frac{S(g,d)}{\mathbb{E}[S(g,d)]}\,(Y_a-\mu_{g,d,a}).\]
The IFs for $\delta_{\APO}$, $\delta_{\ATE}$, $\delta_{\Delta\ATE}$, and $\delta_{\Delta\theta}$ follow by linearity from the IFs above and below. The two non-trivial composite IFs (involving the quotient rule) are
{\small
\begin{align*}
\psi_{\delta_{\theta}(g,d,d',a)}
&= \frac{\psi_{\delta_{\ATE}(g,d,d',a)}}{\delta_{\APO}(g,d,d',a)}
   - \frac{\delta_{\ATE}(g,d,d',a)}{\delta_{\APO}(g,d,d',a)^{2}}\,\psi_{\delta_{\APO}(g,d,d',a)}, \\[4pt]
\psi_{\delta_{\Delta\rho}(d,d',a)}
&= \frac{\psi_{\mu_{f,d,a}}}{\mu_{m,d,a}}
   - \frac{\mu_{f,d,a}\,\psi_{\mu_{m,d,a}}}{\mu_{m,d,a}^{2}}
   - \frac{\psi_{\delta_{\APO}(f,d,d',a)}}{\delta_{\APO}(m,d,d',a)}
   + \frac{\delta_{\APO}(f,d,d',a)\,\psi_{\delta_{\APO}(m,d,d',a)}}{\delta_{\APO}(m,d,d',a)^{2}}.
\end{align*}
}

\subsection{Aggregate Estimator}\label{sec:agg_estimator}

This subsection extends the preceding discussion on estimation and inference from single treatment groups to multiple treatment groups.
For event time $e$, let $\mathcal{D}(e)=\{d:d+e<D_{\max}\}$ and $d'=d+e+1$. 
A generic aggregate causal estimand is identified by a descriptive aggregated estimand with the form $A(e)=\sum_{d\in\mathcal{D}(e)} w_d\,\delta_d$, 
where $\delta_d$ is the descriptive estimand of treatment group $d$ and $w_d=p_d/P_e$ is the population treatment-group share within $\mathcal{D}(e)$, with $p_d=\Pr(D=d)=\mathbb{E}[S(f,d)+S(m,d)]$, $P_e=\sum_{k\in\mathcal{D}(e)}p_k$, and $S(g,d)=\mathbf{1}\{G=g,\,D=d\}$ as above.  As an example aggregate, consider the population quantity $\delta_{\Delta\rho}$, which identifies $\Delta\rho$ under NTD (Section~\ref{sec:alt_estimand}). $A_{\delta_{\Delta\rho}}(e)=\sum_d w_d \, \delta_{\Delta\rho}(d,d',d+e)$ is an aggregate of the effect of parenthood on the gender earnings ratio at event time $e$ across treatment groups. For estimators of the weights I use the empirical treatment-group distribution. If both genders contribute to a single-treatment-group estimator, such as in the case of $\delta_{\Delta\rho}(d,d',d+e)$, the weights sum individuals of both genders, as in $w_{d}(e)=(n_{f,d}+n_{m,d})/\sum_{k\in\mathcal{D}(e)}(n_{f,k}+n_{m,k})$ where $n_{g,d}$ is the number of individuals from gender $g$ in treatment group $d$; for gender-specific estimands, the weights sum the corresponding gender only.

The IF of a generic $A(e)$ follows from linearity and the quotient rule applied to both $\delta_d$ and $w_d$. For example, abusing notation for brevity, write $\delta_{\Delta\rho}(d,e) \equiv \delta_{\Delta\rho}(d,d',d+e)$ and $A(e)\equiv A_{\delta_{\Delta\rho}}(e)$. Applying IF algebra gives
\[
\psi_{A(e)}=\tfrac{1}{P_{e}}\sum_{d\in\mathcal{D}(e)}\bigl[(\delta_{d}-A(e))\psi_{p_{d}}+p_{d}\psi_{\delta_{d}}\bigr],\quad \psi_{p_{d}}=S(f,d)+S(m,d)-p_{d}.
\]

All descriptive quantities above are combinations of conditional outcome means and the treatment-group probability. 
Estimators and standard errors are obtained by replacing each population expectation with its sample analog. 
As above, variance estimators for the aggregated estimators likewise cluster by individuals.

\clearpage
\setcounter{figure}{0}
\setcounter{table}{0}
\section{Aggregate Estimates for Israel, the UK, and Germany}\label{sec:agg_estimates_app}

Figure~\ref{fig:agg_conv_vs_new} reports estimates of $\rho_{\mathrm{Agg}}(e)$ from \eqref{eq:agg_rho} and $\Delta\theta_{\mathrm{Agg}}(e)$ from \eqref{eq:agg_conv}, by event time. The sum of female and male treated units is used as weights for each treatment group, and each treatment-group-specific causal estimand is exchanged with its $2\times2$ descriptive counterpart, estimated via sample analogs. For further discussion on the aggregate estimator see Appendix~\ref{sec:appendix_no_covariates}. Panels report results for Israel, the UK, and Germany.

Across all three countries, $\rho_{\mathrm{Agg}}(e)$ is less negative than $\Delta\theta_{\mathrm{Agg}}(e)$ at every event time. For example, at five years post-childbirth, $\Delta\theta_{\mathrm{Agg}}(5)=-17.4\%$ while $\rho_{\mathrm{Agg}}(5) = -13.8\%$ for Israel, a difference of $3.6$ percentage points ($21\%$). These patterns are consistent with the discussion on single-treatment-group estimates in Section~\ref{sec:new_ntd_empirical}.

\begin{figure}[t!]
    \centering
    \includegraphics[width=\textwidth]{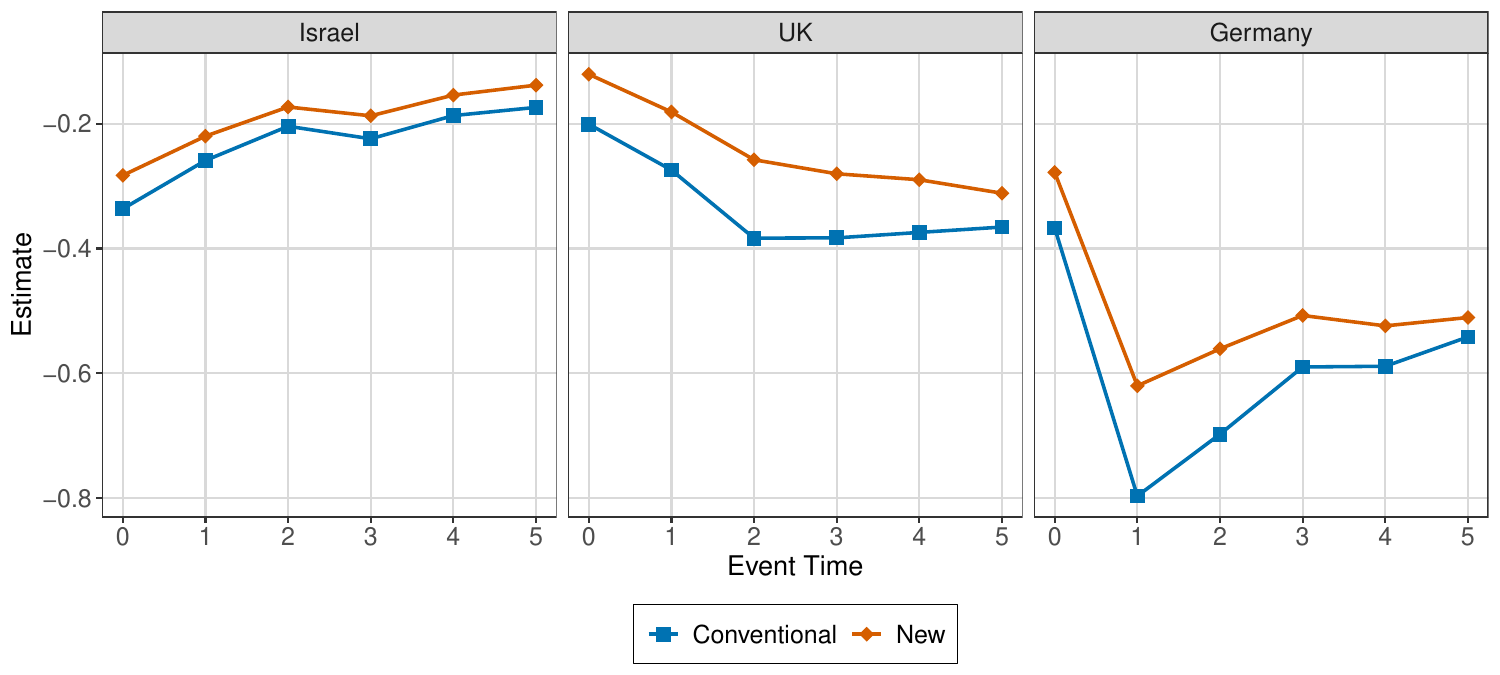}
    \caption{Aggregated Estimates: Conventional vs.\ New Estimator. \textit{Notes:} The figure compares the conventional aggregate gender gap $\Delta\theta_{\mathrm{Agg}}(e)$ (blue squares) with the new aggregate $\rho_{\mathrm{Agg}}(e)$ (orange diamonds), by event time, separately for Israel, the UK, and Germany. Single-treatment-group estimates are weighted by the sum of female and male units in each treatment group.}
    \label{fig:agg_conv_vs_new}
\end{figure}

\end{document}